\newtheorem{theorem}{Theorem}[section]
\newtheorem{remark}[theorem]{Remark}
\newtheorem{lemma}[theorem]{Lemma}
\newtheorem{observation}[theorem]{Observation}
\newtheorem{example}[theorem]{Example}
\newcommand*{\defeq}{\mathrel{\vcenter{\baselineskip0.5ex \lineskiplimit0pt
			\hbox{\scriptsize.}\hbox{\scriptsize.}}}%
	=}
\DeclareMathOperator*{\Z}{\mathbb{Z}}
\DeclareMathOperator*{\N}{\mathbb{N}}
\DeclareMathOperator*{\NP}{\mathsf{NP}}
\DeclareMathOperator*{\Paths}{\mathcal{P}}
\newcommand{\ceil}[1]{\lceil #1 \rceil}
\newcommand{\opts}[2]{\ensuremath{\operatorname{OPT}(#1,\sum,#2)}}
\newcommand{\card}[1]{\left|#1\right|}
\newcommand{\Ol}{\mathcal{O}}
\tikzset{zeilenabstand/.style={%
		execute at begin node=\begin{spacing}{#1}\hspace{0pt},%
			execute at end node=\vspace{-\ht\strutbox}\end{spacing}}}
\definecolor{myGray}{gray}{0.82}
\definecolor{myGrayLight}{gray}{0.91}
\definecolor{myGrayDark}{gray}{0.5}
\newcommand{\problem}[3]{
	\begin{center}
		\begin{tabular}{p{0.13\linewidth} p{0.8\linewidth}}
			\multicolumn{2}{ l}{#1}\\ 
			\textsc{Instance:} & 	#2 \\
			\textsc{Task:} & 	#3 \\ 
			
		\end{tabular}
	\end{center}
}
\newcommand{\decision}[3]{
	\begin{center}
		\begin{tabular}{p{0.13\linewidth} p{0.8\linewidth}}
			\multicolumn{2}{ l}{#1}\\ 
			\textsc{Instance:} & 	#2 \\
			\textsc{Question:} & 	#3 \\ 
			
		\end{tabular}
	\end{center}
}
\title{\(p\)-median location interdiction on trees}
\author[a]{L. Leiß\footnote{Corresponding author, Email address: \texttt{leiss@mathematik.uni-kl.de} (L. Leiß)}}
\author[b]{T. Heller}
\author[c]{L. Schäfer}
\author[d]{M. Streicher}
\author[a]{S. Ruzika}
\affil[a]{\footnotesize Department of Mathematics, RPTU Kaiserslautern-Landau, 67663 Kaiserslautern, Germany}
\affil[b]{\footnotesize Department of Optimization, Fraunhofer Institute for Industrial Mathematics ITWM, 67663 Kaiserslautern, Germany}
\affil[c]{\footnotesize Comma Soft AG, 53229 Bonn, Germany}
\affil[d]{\footnotesize PTV Group, 76131 Karlsruhe, Germany}
\date{}
\begin{document}
\maketitle
\vspace{-3em}
\begin{abstract}
In \(p\)-median location interdiction the aim is to find a subset of edges in a graph, such that the objective value of the \(p\)-median problem in the same graph without the selected edges is as large as possible.

We prove that this problem is \(\NP\)-hard even on acyclic graphs. Restricting the problem to trees with unit lengths on the edges, unit interdiction costs, and a single edge interdiction, we provide an algorithm which solves the problem in polynomial time. Furthermore, we investigate path graphs with unit and arbitrary lengths. For the former case, we present an algorithm, where multiple edges can get interdicted. Furthermore, for the latter case, we present a method to compute an optimal solution for one interdiction step which can also be extended to multiple interdicted edges. 
\end{abstract}

\textbf{Keywords:}
	Network Interdiction, Location Planning, Median Problems, Edge Interdiction, Network Location Planning\\

\section{Introduction}\label{sec:intro}
Location planning is a field of mathematical research which crosses our daily life more often, than we might think at first sight. The root of modern location planning goes back to Pierre de Fermat and aims at finding the point, which minimizes the sum of the Euclidean distances of three given points to the new location (cf. \cite{drezner2001facility}). A popular, more applied version of this problem is the identification of a new location for a supplier of materials for further industrial processing which has been stated in \cite{weber1922Standort} and is called Weber problem (cf. \cite{drezner2001facility}). In a more general approach, a new location is to be found which minimizes the sum of all - possibly weighted - distances from all given locations to the new one. This problem is called median location problem (cf. \cite{laporte2015location}). The underlying structure, on which location problems can be analyzed, may vary. Mainly, we distinguish between planar location problems and network location problems. In this article, we consider the network case only. A further alteration of the main problem comes with the number of new locations to be computed. We refer to the problem of placing \(p\) new facilities as the \(p\)-median location problem. The decision version of the \(p\)-median location problem is known to be \(\NP\)-complete for variable \(p\) on general networks, which can be shown by a reduction from the \textit{dominating set problem} (\cite{kariv1979algorithmicmedian}). For this case, there are several heuristics known. A good overview on this topic can for example be found in \cite{mladenovic2007p}. For a general graph with unit length values on the edges and variable \(p\), the problem still remains \(\NP\)-complete (\cite{NP}). In contrast, for fixed \(p\), the decision problem is solvable in polynomial time on general graphs by enumeration (cf. \cite{NP}). Due to the hardness of the general case, research focused on particular graph structures. For \(G\) being a tree, the authors in \cite{kariv1979algorithmicmedian} present an \(\Ol(n^2p^2)\)-algorithm to solve the problem. A dynamic programming approach based on this result was later proposed in \cite{tamir1996pn2}, which runs in \(\Ol(pn^2)\) and got improved by \cite{benkoczi2005new} to an algorithm which runs in \(\Ol(n\log^{p+2}n)\). The complexity for path graphs is shown to be \(\Ol(pn)\) in \cite{hassin1991improved}.  A linear time algorithm can be applied for the \(1\)-median location problem on trees (\cite{goldman1971optimal}). 

A good overview of location planning can for example be found in \cite{drezner2001facility}, \cite{hale2003location} or \cite{laporte2015location}. An overview of the solution methods for the \(p\)-median location problem in particular can be found in \cite{reese2006solution}.
\par
Interdiction problems pursue the question of how a system can be interrupted in the worst possible way in terms of the original objective function. The interruption itself can be caused by different acts, such as modification of the edge lengths or deletion of entire edges as well as the vertices. 
\par

Interdiction problems have gained increasingly more attention lately (cf.~\cite{smith2020}). There are several different applications, that motivate research in this field. The interdiction of a network can either have a desirable outcome, such as in narrowing the spread of a disease (cf. \cite{assimakopoulos1987network}), interdicting smuggling routes (cf. \cite{morton2007models}) or -- as for example done in \cite{wood1993deterministic} -- the aim of (armed) forces to reduce the amount of drugs and chemicals transported illegally via road or waterways -- possibly with limited resources. Also -- on the contrary --  attacks on networks can be interpreted via interdiction steps. In this case, the analysis of these problems might allow the determination of valuable edges or locations of the original network. \par 
Two main optimization problems, which have been studied in the context of interdiction, are the \textit{shortest path problem} as well as the \textit{maximum flow problem}. 
Notable research on the first topic has been done in \cite{ball1989finding}, \cite{bar1998complexity}, \cite{corley1982most} or \cite{israeli2002shortest},  for instance. Results on the latter problem might for example be found in \cite{ghare1971optimal}, \cite{ratliff1975finding}, \cite{schafer2020bicriterion}, \cite{wollmer1964} or \cite{wood1993deterministic}. In \cite{smith2020}, one can find a recent overview of the literature on interdiction problems.

Research concerning the combination of location and interdiction problems on the other hand is quite scarce. In this context, we mention the \(r\)-interdiction median problem, which is defined as follows. Given a supply-system with \(p\) existing locations, the interdictor wishes to find the subset of \(r\) locations, which, when removed, yields the highest weighted distance with respect to the median location function (cf. \cite{church2004identifying}). There are a few extensions to this problem, such as the possibility to fortify a fixed number of the existing facilities, which in consequence cannot be interdicted by the attacker anymore (cf. \cite{liberatore2011analysis}). In \cite{aksen2010budget}, the authors alter the concept of fortifying a specified number of locations, but rather introduce a restricted budget for fortification. A further topic, gaining more interest, is the $p$-hub interdiction problem, which is for example considered in \cite{ullmert2020p}. Still, most of these approaches have in common, that the interdictor intervenes the existing locations and not the underlying network. An exception to this concept can be found in \cite{frohlich2021hardness}. Here, the authors combine the \textit{covering problem} with an edge interdiction problem.  In \cite{frohlich2022interdicting}, the authors present a polynomial time algorithm for the interdiction problem on trees, where an upfront chosen set of facilities is given and the interdictor wishes to worsen the reachability within the tree. In \cite{frohlich2021facility}, the authors combine the \textit{median location problem} with edge interdiction. To the best of our knowledge, this is the only work on the $p$-median interdiction problem with edge interdiction. There, they consider the problem for different orders of action of the locator and the interdictor. For the case, that the interdictor acts before the locator, they prove this problem to be \(\Sigma_2^p\)-complete in the general case. In the same work, a bilevel mixed-integer formulation is presented for the problem. This result motivates the analysis of the problem for restricted cases.

\paragraph{Our contribution}
Given the complexity analysis in \cite{frohlich2021facility}, it remains open, if efficient solution procedures can be found if the general problem is restricted. Coming from the original \(p\)-median problem, which is solvable in polynomial time on trees, an obvious variant of the corresponding interdiction problem is to restrict the problem to trees (or even simpler structures) as well. For such cases, complexity results and solution methods are not yet available. 

In this article, we aim at closing this research gap.  We consider the median location problem in combination with an interdictor who can delete edges in a given network and wishes to maximize the objective function value of the locator. We assume, that the interdiction step is executed before the locator places their median facility.  Due to sigma-2-p hardness for the general case, we consider the problem on particular graph structures. We analyze the complexity of the general problem on trees and present an algorithm to solve the problem exactly for some variant. Furthermore, we present an algorithm for graphs with path structure for both cases of unit and arbitrary lengths on the edges. 

\paragraph{Outline}
The remainder of this article is structured as follows. In Section~\ref{sec:preliminaries}, we give a short overview of the concepts needed for the article and define the investigated problem. Section~\ref{sec:complexity} deals with the complexity of the median location interdiction problem. Section~\ref{sec.path} studies the strategy for solving the interdiction median problem on paths with unit length values on the edges. Furthermore, we present a strategy for paths with arbitrary lengths. The next Section~\ref{sec.tree} focuses on a tree with unit length values. We state an algorithm, which solves the problem exactly in polynomial time. Section~\ref{sec:conclusion} then summarizes the paper and proposes further directions of research.

\section{Preliminaries and problem formulation}\label{sec:preliminaries}
Let \(G=(V(G),E(G))\) be an undirected graph with vertex set \(V(G)=\{v_1,\dots,v_n\}\) and edge set \(E(G)=\{e_1,\dots,e_m\}\), where \(n\defeq|V(G)|\) and \(m\defeq|E(G)|\). If the underlying graph is known by the context, we refer to \(V(G)\) as \(V\) and to \(E(G)\) as \(E\). Also, for better readability, instead of \(\card{V(G)}\), we may write \(\card{G}\). For a given vertex \(v\), we denote the number of incident edges, i.e. its degree, by \(\deg(v)\). Further, we assign a length value to each edge~\(e\in E\), i.e., \(\ell\colon E\rightarrow\Z_+\). Let \(\Paths_{uv}\) be the set of all paths \(P\) connecting vertices \(u, v\in V\). Then, the length of a shortest path between the vertices~\(u\) and \(v\) is denoted by \(d(u,v)\), i.e., \(d(u,v)\defeq\min\limits_{P\in\Paths_{uv}}\ell(P)\). If the graph on which the distance is measured is not clear from context we also write $d_G(u,v)$.

If \(G\) is a tree, let some vertex \(r\in V\) be the root of the breadth-first-graph of \(G\) (\cite{leiserson1994introduction}). We denote by \(G_{v}\) the subtree of \(G\), which is rooted in vertex \(v\) and contains all descendants of vertex \(v\) in the breadth-first-graph of \(G\) with root \(r\) as well as their incident edges.\\

As stated, given an instance of the median problem, one aims at placing one (or more) new location(s), which minimize(s) the sum of the shortest path lengths of all existing locations to their nearest facility. In this article, we consider the case, that the set of existing locations is the vertex set. A chosen set \(X\) of locations therefore has the objective value:
\[
f(X) \defeq \sum_{v\in V} d(v,X), \text{ with } d(v,X) \defeq \min_{x\in X}d(v, x).
\]
Based on this objective, we state the \(p\)-median location problem as follows.
\problem{\(p\)-median location problem \((p,\sum, G)\)}{Undirected graph \(G=(V,E)\), edge lengths \(\ell\colon E\rightarrow\Z_+\), and number of locations \(p\in\Z_+\).}{Find a set \(X\subseteq V \) of \(p\) new locations such that the objective function of the \(p\)-median location problem is minimal, i.e. minimize \[\sum_{v\in V} d(v_i,X).\]}
The optimal solution is denoted by \(X^*\), while the optimal objective function value is denoted by \(\opts{p}{G}\).

Network interdiction problems involve an additional opposing force, called the interdictor. Said interdictor wishes to worsen the objective function value of the locator. In this article, the interdictor is constrained by an interdiction budget \(B\in\Z_+\), while each edge \(e\in E\) is associated with an interdiction cost \(b(e)\in\Z_+\), i.e., \(b\colon E\rightarrow\Z_+\).
Consequently, the set of all feasible interdiction strategies, denoted by \(\Gamma\), can be expressed as follows:
\begin{equation*}
	\Gamma \defeq \left\{\gamma = (\gamma_e)_{e \in E} \in \{0,1\}^m \mid \sum\limits_{e \in E} b(e)\cdot\gamma_e \leq B\right\},	
\end{equation*}
where \(\gamma_e\) equals one, if edge \(e\) is interdicted or zero, if not. The set of optimal interdiction strategies is denoted by \(\Gamma^*= \left\{\gamma^*\in\Gamma\mid\gamma^*\text{ optimal}\right\}\). 
In what follows, each interdiction strategy \(\gamma \in \Gamma\) induces an undirected graph \(G(\gamma)\defeq(V', E')\) with \(V'=V\) and \(E'=E\setminus E(\gamma)\), where \(E(\gamma)\defeq\{e\in E\mid \gamma_e=1\}\). In this article, the locator places their facility on \(G(\gamma)\), i.e. after the interdiction step.
Based on this, we define the decision version of the \(p\)-median location interdiction problem as follows.

\decision{Decision version of the \(p\)-median location interdiction problem}{Undirected graph \(G=(V,E)\), edge lengths \(\ell\colon E\rightarrow\Z_+\), interdiction costs \(b\colon E\rightarrow\Z_+\), interdiction budget \(B\in\Z_+\), number of locations \(p\in\Z_+\), and decision parameter \(K\in\Z_+\).}{Does there exist an interdiction strategy \(\gamma\in\Gamma\) such that \[\min_{X\subseteq V, \card{X}=p} \sum_{v\in V} d_{G(\gamma)}(v,X)\geq K \ \text{?}\]}
In the optimization version of the stated problem, we aim to find the maximum \(K\) for which the decision version is a \textsc{yes}-instance.

\section{Complexity results}\label{sec:complexity}
It is well known that the $p$-median location problem is $\NP$-complete, cf.~\cite{kariv1979algorithmicmedian}. Therefore it would be surprising for the corresponding interdiction problem to be polynomial time solvable. In fact Fröhlich \cite{frohlich2021facility} proved that the decision version of the $p$-median location interdiction problem is $\Sigma_2^p$-complete. However, there are several restrictions of the $p$-median location problem which are proven to lead to polynomial time solvability, as mentioned in the introduction. Among the  restricted versions, that are polynomial time solvable is the $p$-median location problem on trees, cf.~\cite{tamir1996pn2}. In this section we prove that adding the interdiction layer to the problem makes the problem significantly harder: The $p$-median location interdiction problem is $\NP$-complete even on trees.

For this, we consider the \textsl{knapsack problem with bounded profit ratio of 2 (K-BPR2)}. An instance is given by a set \(M\) of \(m\) items with associated weights \(w_i\geq 0\) and profits \(p_i\geq 0 \ \text{ for all } i\in\left\{1,\dots,m\right\}\). For the profits it holds \(\frac{p_i}{p_j}\leq 2 \text{ for all } p_i,p_j\) with \(i\neq j\). We now show that this problem is $\NP$-complete. 
\begin{lemma}
	The \textsl{knapsack problem with bounded profit ratio of 2} is \(\NP\)-complete.
\end{lemma}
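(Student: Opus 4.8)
The plan is to reduce from the classical \textsc{Subset Sum} (or \textsc{Partition}) problem, which is well known to be $\NP$-complete. Membership in $\NP$ is immediate: given a subset of items as a certificate, one verifies in polynomial time that its total weight does not exceed the capacity and that its total profit meets the threshold, so the bulk of the work is the hardness reduction. The difficulty is that in an arbitrary \textsc{Subset Sum} instance the numbers $p_i$ may have wildly different magnitudes, whereas K-BPR2 demands the bounded ratio $\frac{p_i}{p_j}\le 2$ for all $i\neq j$; so the reduction must massage the profits into a narrow band without destroying the combinatorial structure of the instance.

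First I would take a \textsc{Subset Sum} instance: values $a_1,\dots,a_m\in\Z_+$ and target $T$, asking whether some $S\subseteq\{1,\dots,m\}$ has $\sum_{i\in S}a_i = T$. The key trick is a shifting/padding step. I would introduce a large common offset $L$ (for instance $L \defeq m\cdot\max_i a_i$, or something comparable) and define new profits $p_i \defeq L + a_i$. Then for any two indices $p_i/p_j \le (L+\max_k a_k)/(L) = 1 + \max_k a_k / L \le 1 + 1/m \le 2$, so the bounded-ratio constraint holds by a wide margin. To recover the combinatorial content I would either (a) also use the weights $w_i$ independently to encode the original $a_i$ (e.g.\ set $w_i \defeq a_i$, capacity $\defeq T$) and ask for profit at least $L\cdot(\text{a prescribed cardinality}) + (\text{something})$, which forces the chosen set both to have exactly the right size and the right weight sum; or (b) work with \textsc{Partition} instead, where every feasible solution has the same cardinality issue handled automatically, and the offset $L\lfloor m/2\rfloor$ cancels cleanly. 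Option (b) is cleaner: from a \textsc{Partition} instance $a_1,\dots,a_m$ with $\sum a_i = 2T$, set $w_i \defeq a_i$, $p_i \defeq L + a_i$, capacity $C \defeq T$, and profit threshold $P \defeq kL + T$ where $k$ is the (known, since \textsc{Partition} can be assumed to have balanced cardinality after a standard padding, or one simply ranges over all $k$) size of the target subset; a subset of weight exactly $T$ and size $k$ then has profit exactly $kL+T = P$, and conversely the large value of $L$ forces any profit-$\ge P$, weight-$\le C$ subset to have exactly $k$ items and weight exactly $T$.

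The main obstacle I expect is the interaction between the cardinality of the chosen subset and the profit bound: because each item carries the big additive term $L$, the profit of a feasible solution is dominated by $L$ times its cardinality, so one must ensure the reduction pins down the cardinality. I would handle this either by reducing from a cardinality-constrained variant of \textsc{Subset Sum} (equivalently \textsc{Exact $k$-Subset Sum}, still $\NP$-complete), or by the standard padding trick of adding $m$ dummy items of weight $0$ and value $0$ (which keeps the ratio bound since $0/p_j$ and $p_i/0$ — actually one must keep all profits strictly positive, so use tiny equal dummy profits, or better, simply allow the threshold to be an inequality and argue that at optimum one takes as many items as possible). Once the cardinality is controlled, the arithmetic identities verifying "yes-instance $\leftrightarrow$ yes-instance" are routine, and the ratio check $p_i/p_j\le 2$ follows directly from $L \ge m\max_k a_k$.
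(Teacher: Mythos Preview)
Your core idea—add a large common offset $L$ to all profits so that every $p_i$ lies in $[L,L+\max_k a_k]$ and hence $p_i/p_j\le 2$—is exactly the trick the paper uses. Where your plan slips is in the control of cardinality. In option~(b) you set $w_i\defeq a_i$ (unshifted), $p_i\defeq L+a_i$, $C\defeq T$, $P\defeq kL+T$, and assert that ``the large value of $L$ forces any profit-$\ge P$, weight-$\le C$ subset to have exactly $k$ items.'' It does not: the profit bound only gives $|S|\,L+\sum_{i\in S}a_i\ge kL+T$, which together with $\sum_{i\in S}a_i\le T$ implies $|S|\ge k$, but nothing prevents $|S|>k$. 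A subset with $k+1$ items of small total weight ($\le T$) satisfies both constraints, so the K\nobreakdash-BPR2 instance can be a \textsc{yes} even when the source instance is a \textsc{no}. Reducing from Exact-$k$-Subset-Sum or Balanced Partition does not repair this as long as the weights remain unshifted.

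The paper closes this gap with one clean move you almost reached: it shifts \emph{both} profits and weights by the same offset, setting $p_i=w_i\defeq \tilde w_i+B+1$ (where $B=\sum_i\tilde w_i$), and reduces from \textsl{Equal Partition}. With capacity and threshold both equal to $\frac{B}{2}+\frac{n}{2}(B+1)$, the weight constraint forces $|S|\le n/2$ (since $|S|>n/2$ would already make $\sum w_i>(n/2)(B+1)+B$) and the profit constraint forces $|S|\ge n/2$; together they pin $|S|=n/2$ exactly, after which the residual arithmetic gives $\sum_{i\in S}\tilde w_i=B/2$. So your plan is salvageable with a one-line change: put the offset on the weights as well as the profits, and let the two opposing inequalities squeeze the cardinality.
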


\begin{proof}
	We reduce the \textsl{equal partition problem} to \textsl{K-BPR2}. Let an instance of the \textsl{equal partition problem}, which is known to be \(\NP\)-complete (cf.~\cite{NP}), be given with a set \(I=\left\{\tilde{w}_1,\dots,\tilde{w}_n\right\}\) such that \(\sum_{i=1}^{n}\tilde{w}_i=B\in 2\Z, \tilde{w}\geq 0\). For the \textsl{equal partition problem} we ask for a partitioning of the elements of \(I\) into two subsets \(I_1,I_2\) such that \(\sum_{i\in I_1}\tilde{w}_i=\sum_{i\in I_2}\tilde{w}_i=\frac{B}{2}\) and \(\card{I_1}=\card{I_2}=\frac{n}{2}\). Define \(p_i=w_i\defeq \tilde{w}_i+B+1 \ \text{ for all } i\). According to this definition, the smallest \(p_i\) is at least \(B+1\) and the biggest \(p_i\) is not greater than \(2B+1\) such that the ratio \(\frac{p_i}{p_j}\) of all pairs \(p_i,p_j, i\neq j\) is bounded by \(2\).\\
	Suppose we are given a solution \(I_1\) to the \textsl{equal partition problem}, we ask for a solution to the \textsl{knapsack problem} with profit \[P\geq \frac{B}{2}+\frac{n}{2}(B+1)\] and total weight \[W\leq \frac{B}{2}+\frac{n}{2}(B+1).\]We state, that the selection \(I_1\) is such a solution. For the profit, it is \(P=\sum_{i\in I_1}p_i=\sum_{i\in I_1}(\tilde{w}_i+B+1)=\frac{B}{2}+\frac{n}{2}(B+1)\). Also, \(W=\sum_{i\in W_1}w_i=\sum_{i\in I_1}(\tilde{w}_i+B+1)=\frac{B}{2}+\frac{n}{2}(B+1)\).\\
	On the other hand, given a solution \(J\in I\) to the \textsl{knapsack problem} for which \(P\geq \frac{B}{2}+\frac{n}{2}(B+1)\) and \(W\leq \frac{B}{2}+\frac{n}{2}(B+1)\) we show that \(J\) together with \(I\setminus J\) is a solution to the \textsl{equal partition problem}. To prove this statement, we need to show that \(\card{J}=\frac{n}{2}\). Suppose that \(\card{J}<\frac{n}{2}\). Then for the profit, it is \(\sum_{j\in J}p_j=\sum_{j\in J}(\tilde{w}_j+B+1)\leq B + (\frac{n}{2}-1)(B+1)=\frac{n}{2}(B+1)-1<P \) which is a contradiction to the solution of the \textsl{knapsack problem}. Analogously, suppose that \(\card{J}>\frac{n}{2}\). Then for the weight it is \(\sum_{j\in J}w_j=\sum_{j\in J}(\tilde{w}_j+B+1)\geq B + (\frac{n}{2}+1)(B+1)=2B+1+\frac{n}{2}(B+1)>W\). Therefore, we have \(\card{J}=\frac{n}{2}\). Finally, calculate \[\sum_{j\in J}p_j=\sum_{j\in J}\tilde{w}_j+\frac{n}{2}(B+1)\geq \frac{B}{2}+\frac{n}{2}(B+1)\] and also \[\sum_{j\in J}w_i=\sum_{j\in J}\tilde{w}_j+\frac{n}{2}(B+1)\leq\frac{B}{2}+\frac{n}{2}(B+1).\]Therefore, we get that \(\sum_{j\in J}\tilde{w}_j=\frac{B}{2}\).
	
	Clearly, given a solution to the \textsl{K-BPR2}, we can verify this solution in polynomial time. Thus, the \textsl{K-BPR2} is $\NP$-complete. 
\end{proof}

\begin{theorem}~\label{thm:compl}
	The \(p\)-median location interdiction problem, where the underlying graph is given by a tree, is \(\NP\)-complete.
\end{theorem}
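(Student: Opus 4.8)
The plan is to prove the two directions of $\NP$-completeness separately, with membership being routine and the hardness reduction carrying the weight.

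For membership in $\NP$, I would use an interdiction strategy $\gamma\in\Gamma$ itself as the certificate. Feasibility, $\sum_e b(e)\gamma_e\le B$, is checked in linear time, and it remains to decide whether $\min_{|X|=p}\sum_v d_{G(\gamma)}(v,X)\ge K$. Since $G(\gamma)$ is a forest this quantity is computable in polynomial time: if $G(\gamma)$ has more than $p$ connected components the minimum is $+\infty\ge K$; otherwise one runs the polynomial-time $p$-median algorithm for trees (cf.~\cite{tamir1996pn2}) on each component, obtains for each component its optimal cost as a function of the number of facilities assigned to it, and combines these by a straightforward dynamic program distributing the $p$ facilities among the components. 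Hence the decision version lies in $\NP$.

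For $\NP$-hardness I would reduce from \textsl{K-BPR2}, shown $\NP$-complete above. Given items $1,\dots,m$ with weights $w_i$ and profits $p_i$ satisfying $p_i/p_j\le 2$ (we may clearly assume $p_i\ge 1$), a capacity $\bar W$ and a profit target $\bar P$, put $V\defeq\max_j p_j+1$ and build the following star-shaped tree $T$ around a central vertex $c$: attach $m+1$ ``core'' pendant vertices $y_1,\dots,y_{m+1}$ to $c$ by edges of length $V$, and for each item $i$ attach a single pendant vertex $r_i$ by an edge $cr_i$ of length $L_i\defeq V-p_i\ (\ge 1)$. Give each edge $cr_i$ interdiction cost $w_i$ and each edge $cy_j$ the prohibitive cost $\bar W+1$. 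Finally set $p\defeq m+1$, $B\defeq\bar W$, and $K\defeq V+\sum_{i=1}^{m}L_i+\bar P$; the whole instance has size polynomial in the input. The correctness proof hinges on the following claim: since the prohibitive costs force any feasible $\gamma$ to interdict only a set $S\subseteq\{1,\dots,m\}$ of the edges $cr_i$ with $\sum_{i\in S}w_i\le\bar W$, the resulting $p$-median optimum on $G(\gamma)$ equals $V+\sum_{i=1}^{m}L_i+\sum_{i\in S}p_i$. Intuitively, the locator keeps a facility at $c$ (dropping it makes all pendants far from every facility and costs $mV$ extra), must spend exactly one facility on each isolated vertex $r_i$ with $i\in S$, and spends each remaining facility on a core pendant (worth $V$) rather than on a pendant $r_i$ with $i\notin S$ (worth only $L_i=V-p_i<V$); a contribution count then gives the stated value. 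Granting the claim, an interdiction strategy with objective $\ge K$ exists iff there is $S$ with $\sum_{i\in S}w_i\le\bar W$ and $\sum_{i\in S}p_i\ge\bar P$, i.e.\ iff the \textsl{K-BPR2} instance is a \textsc{yes}-instance, which together with membership in $\NP$ proves the theorem.

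The main obstacle is establishing the claim, that is, pinning down the locator's optimal response on the forest $G(\gamma)$ \emph{simultaneously for every admissible $S$}: one has to rule out all competing facility placements — not keeping $c$, placing several facilities inside one pendant, covering an $r_i$ with $i\notin S$ at the expense of a core pendant, and so on — and verify that the clean allocation above is optimal in each configuration. I expect this is precisely where the bounded-profit-ratio hypothesis $p_i/p_j\le 2$ is needed: it is what keeps the lengths $L_i=V-p_i$ in the right proportion to $V$ so that the same ``one forced facility per cut pendant, the rest on the core'' response stays optimal uniformly in $S$, making the objective increase exactly additive in the $p_i$. The remaining steps — computing the base value $V+\sum_i L_i$ and checking the arithmetic of the contribution count — are routine.
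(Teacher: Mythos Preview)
Your proposal is correct and takes a genuinely different route from the paper. The paper also reduces from \textsl{K-BPR2} but attaches to a hub $v_x$ one short pendant path together with $m$ paths of four vertices each, with edge lengths $0,p_i,0,0$ (the cheap interdictable edge being the one incident to $v_x$), and sets $p=m+1$. There the bounded profit ratio is indispensable: ruling out that the locator shifts a center from a non-interdicted branch into an interdicted one compares a saving of at most $p_i$ against a loss of $3p_j$, and this needs $p_i\le 2p_j$. Your star gadget is simpler to analyse: once a facility sits at $c$, moving one of the remaining facilities from a core pendant (worth $V$) to an item pendant $r_j$ (worth $L_j=V-p_j<V$) always loses exactly $p_j>0$, and dropping $c$ altogether adds $mV$; both statements are uniform in $S$ and require no ratio hypothesis. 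So the step you flag as ``the main obstacle'' is in fact a two-line exchange argument, and your expectation that the BPR2 hypothesis enters here is misplaced --- your construction actually reduces directly from ordinary \textsc{Knapsack}. What each approach buys: the paper's gadget keeps all edge lengths in $\{0,p_i\}$ but pays with a more delicate case analysis that genuinely needs BPR2; your gadget uses the auxiliary scale $V$ but makes the locator's optimal response immediate.
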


\begin{proof}
	We show this by reducing \textsl{K-BPR2} to the location interdiction problem. Let an instance of \textsl{K-BPR2} be given by a set~\(M\) of \(m\) items with associated weights \(w_i\) and profits~\(p_i \ \text{ for all } i\in\left\{1,\dots,m\right\}\). For the profit it holds for all \(p_i,p_j\) with~\(i\neq j\) that \(\frac{p_i}{p_j}\leq 2\). Furthermore, let \(W, P\in\Z_+\) be two integers. Consider a tree with lengths~\(\ell_i\) and interdiction costs~\(b_i\) for every edge~\(e_i\) as depicted in Figure~\ref{fig:nptree}. Note, that the construction provides \(m\) paths of four vertices emerging from vertex~\(v_x\) and one path of two vertices. Let \(B=W\) be the interdiction budget. For every selection of interdicted edges, one optimal strategy to choose \(m+1\) centers is depicted in Figure~\ref{fig:nptreesol}.
	
	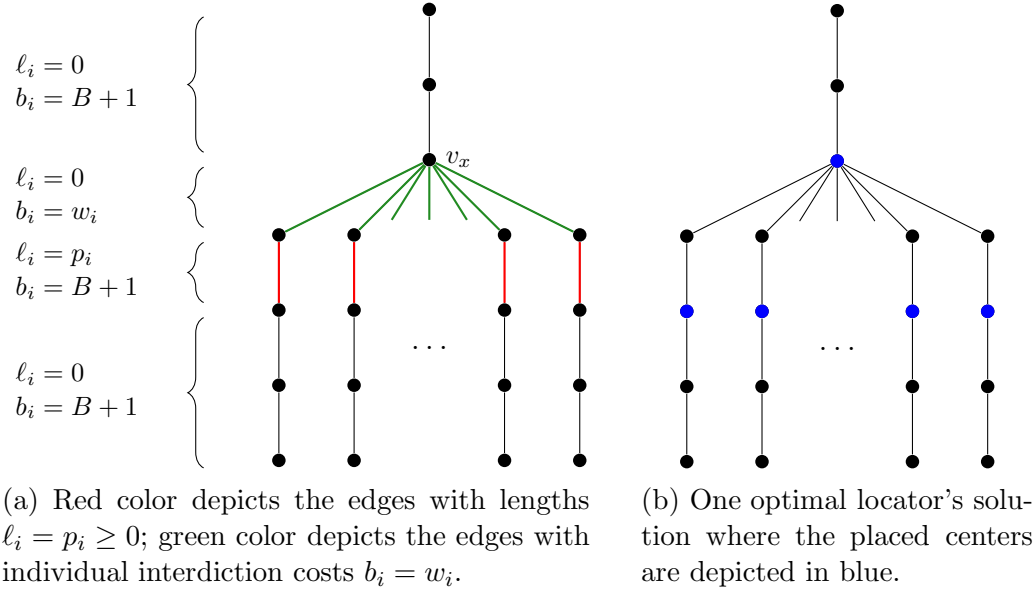
\begin{figure}
		\centering
		\subcaptionbox{Red color depicts the edges with lengths \(\ell_i=p_i\geq 0\); green color depicts the edges with individual interdiction costs \(b_i=w_i\).  \label{fig:nptree}}%
		[.57\textwidth]{\begin{tikzpicture}
				\begin{scope}
					[every node/.style={shape=circle,fill=black, inner sep=1.8}]
					\foreach \x in {-1, 0}{
						\foreach \y [evaluate=\y as \z using int( 4*(\x+1)+\y+2 )] in {-1, 0, 1, 2}{
							\node (\z) at (\x,\y){};
						}
					}
					\foreach \x in {2, 3}{
						\foreach \y [evaluate=\y as \z using int( 4*(\x)+\y+2 )] in {-1, 0, 1, 2}{
							\node (\z) at (\x,\y){};
						}
					}
					\node (17) [label={[label distance=0cm]0:\footnotesize\(v_x\)}] at (1,3){};
					\node (18) at (1,4){};â
					\node (19) at (1,5){};
					\draw (1) -- (2) -- (3) -- (4) -- (17) -- (8) -- (7) -- (6) -- (5);
					\draw (9) -- (10) -- (11) -- (12) -- (17) -- (16) -- (15) -- (14) -- (13);
					\draw (17) -- (18) -- (19);
					\draw[thick, ForestGreen] (17) -- (.5,2.2);
					\draw[thick, ForestGreen] (17) -- (1,2.2);
					\draw[thick, ForestGreen] (17) -- (1.5,2.2);
					\draw[thick, red] (3) -- (4);
					\draw[thick, red] (7) -- (8);
					\draw[thick, red] (11) -- (12);
					\draw[thick, red] (15) -- (16);
					\draw[thick, ForestGreen] (4) -- (17) -- (8);
					\draw[thick, ForestGreen] (12) -- (17) -- (16);
				\end{scope}
				\node (d) at (1,.5) {\(\dots\)};
				\begin{scope}[every node/.style={text width=2cm,zeilenabstand=.5,label distance=-2pt, midway, xshift=-1.5cm}]
					\draw [decorate,decoration={brace,amplitude=6pt}] (-2,3.1) -- (-2,4.9) node {\footnotesize \(\begin{aligned}  \ell_i&=0 \\ b_i&=B+1 \end{aligned}\)};
					\draw [decorate,decoration={brace,amplitude=6pt}] (-2,2.1) -- (-2,2.9) node {\footnotesize\(\begin{aligned}  \ell_i&=0 \\ b_i&=w_i \end{aligned}\)};
					\draw [decorate,decoration={brace,amplitude=6pt}] (-2,1.1) -- (-2,1.9) node {\footnotesize\(\begin{aligned} \ell_i&=p_i\\ b_i&=B+1 \end{aligned}\)};
					\draw [decorate,decoration={brace,amplitude=6pt}] (-2,-1.1) -- (-2,0.9) node {\footnotesize\(\begin{aligned}  \ell_i&=0 \\ b_i&=B+1 \end{aligned}\)};
				\end{scope}
		\end{tikzpicture}}
		\hfill
		\subcaptionbox{One optimal locator's solution where the placed centers are depicted in blue. \label{fig:nptreesol}}
		[.38\textwidth]{
			\begin{tikzpicture}
				\begin{scope}
					[every node/.style={shape=circle,fill=black, inner sep=1.8}]
					\foreach \x in {-1, 0}{
						\foreach \y [evaluate=\y as \z using int( 4*(\x+1)+\y+2 )] in {-1, 0, 1, 2}{
							\node (\z) at (\x,\y){};
						}
					}
					\foreach \x in {2, 3}{
						\foreach \y [evaluate=\y as \z using int( 4*(\x)+\y+2 )] in {-1, 0, 1, 2}{
							\node (\z) at (\x,\y){};
						}
					}
					\node (17) at (1,3){};
					\node (18) at (1,4){};
					\node (19) at (1,5){};
					\draw (1) -- (2) -- (3) -- (4) -- (17) -- (8) -- (7) -- (6) -- (5);
					\draw (9) -- (10) -- (11) -- (12) -- (17) -- (16) -- (15) -- (14) -- (13);
					\draw (17) -- (18) -- (19);
					\node[thick, blue] at (3) {};
					\node[thick, blue] at (7) {};
					\node[thick, blue] at (11) {};
					\node[thick, blue] at (15) {};
					\node[thick, blue] at (17) {};
				\end{scope}
				\node (d) at (1,.5) {\(\dots\)};
				\draw (17) -- (.5,2.2);
				\draw (17) -- (1,2.2);
				\draw (17) -- (1.5,2.2);
		\end{tikzpicture}}
		\caption{The tree graphs used in the proof of Theorem~\ref{thm:compl}. \label{fig:nptreetree}}
	\end{figure}
	
	To prove this fact, we need to show that every single path emerging from vertex \(v_x\) must have one center, and furthermore, that the center in the paths with 4 vertices must be below the edges where \(\ell_i=p_i\).\\
	In case that every outgoing edge of \(v_x\) which is part of the \(m\) paths of 4 vertices (depicted in green in Figure~\ref{fig:nptree}) is interdicted, the stated solution is clearly optimal. Now consider the case where w.l.o.g. the leftmost interdictable edge is interdicted and let the second outgoing edge of \(v_x\) be non interdicted. The notation used  can be found in Figure~\ref{fig:npredtree}.
	
	\begin{figure}
		\centering
		\begin{tikzpicture}
			\begin{scope}
				[every node/.style={shape=circle,fill=black, inner sep=1.8}]
				\foreach \x/\y/\z in {0/0/1, 0/1/2, 0/2/3, 0/3/4, 1/4/5, 2/3/6, 2/2/7, 2/1/8, 2/0/9, 1/5/10, 1/6/11} \node (\z) at (\x,\y) {};
				\draw (1) -- (4); 
				\draw (5) -- (6)-- (9);
				\draw (5) -- (11);
				\foreach \r in {3, 5, 7} \node[thick, blue] at (\r){};
			\end{scope}
			\node [label={[label distance=0cm]0:\footnotesize\(v_x\)}] at (5){};
			\node [label={[label distance=0cm]0:\footnotesize\(u_2\)}] at (7){};
			\node [label={[label distance=0cm]0:\footnotesize\(v_2\)}] at (6){};
			\node [label={[label distance=0cm]180:\footnotesize\(u_1\)}] at (3){};
			\node [label={[label distance=0cm]180:\footnotesize\(v_1\)}] at (4){};
			\node [label={[label distance=0cm]180:\scriptsize\(\ell_1\)}] at (0,2.5) {};
			\node [label={[label distance=0cm]0:\scriptsize\(\ell_2\)}] at (2,2.5) {};
			\foreach \j/\k in {1/4.5,1/5.5,2/0.5,2/1.5} \node [label={[label distance=-.25cm]0:\scriptsize\(0\)}] at (\j,\k) {};
			\foreach \j/\k in {0/0.5,0/1.5} \node [label={[label distance=-.25cm]180:\scriptsize\(0\)}] at (\j,\k) {};

		\end{tikzpicture}
		\caption{Excerpt of the tree from Figure~\ref{fig:nptreetree}.\label{fig:npredtree}}
	\end{figure}
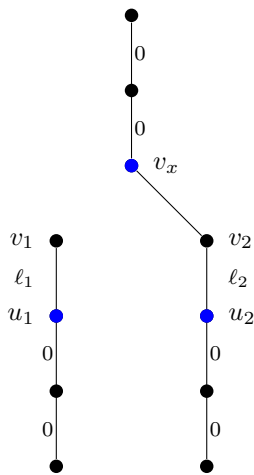
	It is clear, that every path with interdicted starting edge (the edge outgoing of \(v_x\)) needs at least one center for feasibility. Also, this center has to be placed below the edge with length greater than zero -- in the considered case \(u_1\) or a vertex below. Now, the only possibility for the locator to change the objective function value is to spare the center of the non interdicted paths -- \(u_2\) in the example --  and instead place it at \(v_1\). Then, the length \(\ell_1\) does not appear in the calculation of the objective function value. But instead, all vertices in the second path then need to be covered via center \(v_x\). That means, the edge with length \(\ell_2\) needs to be crossed 3 times. Even if we assume that the lengths have the maximal possible factor \(\ell_1=2\cdot\ell_2\), it would still be better for the locator to place the center at vertex \(u_2\). The same explanation holds for the shifting of the center in \(v_x\) to a vertex in an interdicted path. In this case, the vertices above \(v_x\) need to be covered by another center below which is -- by the same estimation as before -- worse than the provided solution of Figure~\ref{fig:nptreesol}. Note that this solution is not unique. In fact -- as stated before -- the center in the \(m\) paths of 4 vertices can be placed at any vertex below the edges where \(\ell_i=p_i\). Also, the center at \(v_x\) can be shifted up to two vertices up.\\
	Assume we are given a solution of the \textsl{knapsack problem} such that the selection \(I\in M\) of items fulfills \(\sum_{i\in I}w_i\leq W\) and \(\sum_{i\in I}p_i\geq P\). Now consider the optimal solution of the locator as stated above after interdiction of the edges  \(e_i, i\in I\).  For the interdiction costs, it is \(\sum_{i\in I}b_i=\sum_{i\in I}w_i\leq W\). Furthermore, the corresponding objective function value calculates as \(\sum_{i\in I}\ell_i=\sum_{i\in I}p_i\geq P\).\\

	On the other hand, if we are given a selection of interdicted edges \(J\in E_M\) such that \(\sum_{j\in J}b_j\leq B\) and \(\sum_{j\in J}\ell_j\geq P\), we show that \(J\) is a solution to \textsl{K-BPR2}. Firstly, \(\sum_{j\in J}w_j=\sum_{j\in J}b_j\leq B=W\). Also, the profit calculates as \(\sum_{j\in J}p_j=\sum_{j\in J}\ell_j\geq P\).
	
	Given an interdiction strategy \(\gamma\), the corresponding \(p\)-median problem on $G(\gamma)$ can be solved in polynomial time as $G(\gamma)$ still does not contain cycles, cf.~\cite{kariv1979algorithmicmedian}. Thus, the \(p\)-median interdiction problem on trees is contained in \(\NP\) and therefore \(\NP\)-complete.
	
\end{proof}

The \(p\)-median location interdiction problem is \(\NP\)-complete, even on trees in general. In the remainder of the article, we further restrict the problem in order to get a better impression on what makes the problem hard. 

A direct consequence of the \(p\)-median location interdiction problem on trees beeing contained in \(\NP\), is that regarding the interdiction budget as a constant makes the problem polynomial time solvable. This holds true, as having a constant interdiction budget makes the number of possible interdiction strategies polynomial in the instance size. Thus, the procedure of solving a \(p\)-median problem for all interdiction strategies and thereby finding the best strategy runs in polynomial time.
\begin{observation}
	The \(p\)-median interdiction problem on trees is polynomial time solvable if the interdiction budget is considered a constant.
\end{observation}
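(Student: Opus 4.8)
The plan is to reduce the problem to polynomially many instances of the ordinary $p$-median problem on acyclic graphs, each of which is already known to be solvable in polynomial time. Since each interdiction cost $b(e)$ is a positive integer, at most $B$ edges can be interdicted within the budget, so every feasible strategy $\gamma\in\Gamma$ is determined by a subset $E(\gamma)\subseteq E$ of size at most $B$. The number of such subsets is $\sum_{k=0}^{B}\binom{m}{k}=\Ol(m^{B})$, which is polynomial in the instance size whenever $B$ is a constant. So I would enumerate all of these subsets, discard those whose total cost exceeds $B$, and for each resulting strategy $\gamma$ compute the locator's optimal value $\min_{X\subseteq V,\ \card{X}=p}\sum_{v\in V}d_{G(\gamma)}(v,X)$; returning the maximum of these values over all enumerated strategies then solves the optimization version, and comparing it with $K$ settles the decision version.

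It remains to argue that each single evaluation runs in polynomial time. Deleting edges from a tree yields a forest, so $G(\gamma)$ is still acyclic and, exactly as in the proof of Theorem~\ref{thm:compl} (cf.~\cite{kariv1979algorithmicmedian,tamir1996pn2}), the $p$-median problem on it is solvable in polynomial time. The one point needing a moment's care is that $G(\gamma)$ may be disconnected: if it has components $T_1,\dots,T_c$, I would compute $\opts{p_i}{T_i}$ for every component $T_i$ and every $p_i\in\{1,\dots,\card{T_i}\}$ with the tree algorithm, and then distribute the $p$ centers over the components by an elementary dynamic program; and if $c>p$, no feasible placement of $p$ centers reaches every vertex, so the objective is $+\infty$ and, for this strategy, the decision instance is trivially a \textsc{yes}-instance.

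Putting the pieces together, the total work is $\Ol(m^{B})$ strategies times one polynomial $p$-median computation per strategy, hence polynomial in $n$, $m$ and $p$ for fixed $B$. The main obstacle is bookkeeping rather than a genuine difficulty: one has to make precise that the $p$-median subroutine quoted for trees still applies once the interdicted tree falls apart into a forest, i.e. handle the allocation of centers among the components and the degenerate case $c>p$. Everything else is immediate from the counting bound on the number of interdiction strategies, which is precisely the consequence of $\NP$-membership noted in the paragraph preceding the statement.
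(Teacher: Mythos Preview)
Your argument is correct and follows exactly the same route as the paper: enumerate the \(\Ol(m^{B})\) feasible interdiction strategies and solve a \(p\)-median instance on each resulting acyclic graph. If anything, you are more careful than the paper, which simply invokes the tree algorithm on \(G(\gamma)\) without spelling out the forest case or the possibility \(c>p\); your dynamic program over the components and the \(+\infty\) convention make this explicit, but the underlying idea is identical.
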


Further possible restriction are to simplify the graph structure even more, making the edges have unit interdiction costs, making the edges have constant or even unit lengths, or restricting the number $p$ of locations to be placed. Any combinations of these restrictions is also interesting. As analyzing all possible and interesting combinations would exceed the scope of a single article, we focus on the subcase of unit interdiction cost and regard different further restrictions.

In the next section, we consider the \(p\)-median location interdiction problem on paths with unit interdiction costs and afterwards move back to the same problem on trees.

\section{Interdicting a path}\label{sec.path}
In this section, let \(G=(V,E)\) be a graph with \(V=\{v_1,\dots,v_n\}\) and \(E=\{e_i=\{v_i,v_{i+1}\}\colon i=\{1,\dots,n-2\}\). The resulting graph has the structure of a path consisting of \(n\) vertices and \(n-1\) edges. For the remainder, we refer to these types of graphs as paths. \\
In this section we tackle the \(p\)-median location interdiction problem on paths, where we additionally assume unit interdiction costs and \(p=B+1\), i.e.~every component emerging from the interdiction can be equipped with exactly one location. Note that for unit interdiction costs, we may assume \(B\leq n-1\), as a path only contains $n-1$ edges.
We first elaborate on paths with unit lengths. In this case the interdictor can use a simple method to worsen the situation for the locator. This procedure is initially analyzed for one interdiction step (\(B=1\)) and is then generalized to arbitrary interdiction budgets \(B> 1\). After that, we show, how paths with arbitrary lengths can be handled for \(B=1\). 

We want to briefly present the idea of Goldman's algorithm (cf \cite{goldman1971optimal}), since it is needed in the remainder of the article. For a tree \(T\), we start at an arbitrary leaf and compare the weight of that leaf to the total summarized weight of all vertices. As long as the weight of the leaf is less than half of the total weight, we delete the leaf and update the weight of the adjacent vertex by adding the weight of the deleted leaf. In that manner, we iterate over the leaves until we find one with a weight greater or equal to the half of the total weight. This vertex is the \(1\)-median of the original graph. With this method, we are also able to state the \(1\)-median location(s) on a path, which is dependent on the number of the vertices. The optimal solution(s) on a path is to place the new location(s) at vertex \(v_{\nicefrac{n}{2}-1}\) or \(v_{\nicefrac{n}{2}}\) for \(n\) even or at vertex \(v_{\ceil{\nicefrac{n}{2}}}\) for \(n\) odd. 
\subsection{Paths with unit edge weigths}\label{subsec.path.unit}
Let the graph be a path \(P=\left(v_1,e_1,v_2,\dots,e_{n-1},v_n\right)\) with \(\ell\equiv 1\) and \(b\equiv 1\) as described above. We first examine the case for \(B=1\). 
\begin{lemma}\label{lem:b1path}
	Let \(P\) be a path with \(\ell\equiv 1\) and \(b\equiv 1\). The optimal interdiction strategies for the \(p\)-median location interdiction problem are to interdict \(e_1\) or \(e_{n-1}\), yielding in an isolated vertex and a new path of length \(n-2\), i.e. \[\Gamma^*=\{\gamma_1^*=(1,0,\dots,0),\gamma_2^*=(0,\dots,0,1)\},\] where the order of \(\gamma_i^*, i=1,2\) is induced by the order of the edges.
\end{lemma}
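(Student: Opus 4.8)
The plan is to reduce the lemma to maximizing a one–variable integer function. Since $b\equiv 1$ and $B=1$, a feasible interdiction strategy deletes at most one edge; the empty strategy leaves $P$ connected and is easily seen to be dominated, so we may assume exactly one edge $e_i=\{v_i,v_{i+1}\}$ is deleted. Its removal splits $P$ into two sub-paths, on the vertex sets $\{v_1,\dots,v_i\}$ and $\{v_{i+1},\dots,v_n\}$. As $p=B+1=2$ and $G(\gamma)$ has exactly two components, the minimizing locator must serve each component (leaving one unserved makes $f$ infinite), and it is optimal to serve each component by a single $1$-median since the two contributions decouple. Hence interdicting $e_i$ induces the value $h(i)\defeq\opts{1}{P^{(i)}}+\opts{1}{P^{(n-i)}}$, where $P^{(k)}$ denotes the path on $k$ vertices, and it remains to maximize $h$ over $i\in\{1,\dots,n-1\}$.

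Next I evaluate $\opts{1}{P^{(k)}}$. By Goldman's algorithm the $1$-median of a unit-length path on $k$ vertices sits at its central vertex, and summing the distances gives $\opts{1}{P^{(k)}}=\floor{k^2/4}=:M(k)$, a sequence obeying the increment identity $M(k)-M(k-1)=\floor{k/2}$. Therefore
\[
h(i+1)-h(i)=\bigl(M(i+1)-M(i)\bigr)-\bigl(M(n-i)-M(n-i-1)\bigr)=\floor{\tfrac{i+1}{2}}-\floor{\tfrac{n-i}{2}},
\]
and the right-hand side is non-decreasing in $i$ (the first floor is non-decreasing, the second non-increasing), so $h$ is discretely convex on $\{1,\dots,n-1\}$. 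A convex function on a discrete interval attains its maximum at an endpoint, and $h(1)=M(1)+M(n-1)=M(n-1)=h(n-1)$ since $M(1)=0$; the two maximizers are exactly $\gamma_1^*=(1,0,\dots,0)$ and $\gamma_2^*=(0,\dots,0,1)$, whose induced graph is an isolated vertex together with a path on $n-1$ vertices, as claimed.

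For the uniqueness part one rules out ties at interior edges: since the forward differences are non-decreasing with $h(2)-h(1)=1-\floor{(n-1)/2}<0$ and $h(n-1)-h(n-2)=\floor{(n-1)/2}-1>0$ whenever $n\ge 5$, the sequence $h(1),\dots,h(n-1)$ strictly decreases from $h(1)$, bottoms out, and strictly increases back to $h(n-1)=h(1)$, so $h(i)<h(1)$ for every $2\le i\le n-2$; the cases $n\le 4$ are settled by direct inspection.

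I expect the discrete convexity of $h$ to be the only step that really needs care: because of the floors in $M$ one cannot simply differentiate, and the monotonicity of $i\mapsto\floor{(i+1)/2}-\floor{(n-i)/2}$ (equivalently, the convexity of $k\mapsto\floor{k^2/4}$) has to be argued through the two floor terms, with attention to the parities of $i$ and $n$. The two modelling facts used beforehand — that the minimizing locator serves both components and places exactly one $1$-median in each — also warrant the brief finiteness/minimality justification indicated above, but everything else is routine arithmetic.
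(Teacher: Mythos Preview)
Your argument is correct and follows essentially the same route as the paper's: reduce to maximizing $h(i)=M(i)+M(n-i)$ over $i$, observe the function is convex, and conclude the maximum lies at the endpoints. The paper carries out the convexity step by writing $h(i)=\tfrac14(2i^2-2ni+n^2-a)$ with a parity correction $a\in\{0,1,2\}$ and then completing the square to $(i-n/2)^2$; your forward-difference computation via $M(k)-M(k-1)=\lfloor k/2\rfloor$ is a cleaner way to handle the floors, since the paper's $a$ in fact varies with $i$. One small caveat on your uniqueness addendum: for $n=4$ one has $h(1)=h(2)=h(3)=2$, so the middle edge ties and $\Gamma^*$ as displayed in the lemma is not exhaustive there --- your ``direct inspection for $n\le4$'' should record this rather than confirm strict uniqueness.
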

\begin{proof}
	As described above, the optimal solution(s) for the \(1\)-median problem are at vertex \(v_{\nicefrac{n}{2}-1}\) or \(v_{\nicefrac{n}{2}}\) for \(n\) even or at vertex \(v_{\ceil{\nicefrac{n}{2}}}\) for \(n\) odd. The resulting objective function value~\opts{1}{P} can then be computed via: 
	\[\opts{1}{P}=\begin{cases}
	\frac{n^2}{4} & n \text{ even}\\
	\frac{n^2-1}{4} & \text{else}
	\end{cases}\]
	Every interdiction strategy with \(B=1\) results in two components of the original path. Since \(p=2\), i.e. one new location in each component, we compute the optimal objective function value for the \(2\)-median problem under the given setting by adding up both objective function values computed separately for every component. Therefore, let \(e_t\) be the interdicted edge for some \(t\in\{1,\ldots,n-1\}\) yielding a separation of the original path \(P\) into \(P_t=(v_1,e_1,v_2,\dots,e_{t-1},v_t)\) and  \(P_{n-t}=(v_{t+1},e_{t+1},\dots,e_{n-1},v_n)\). Then, the objective function for the overall problem of placing one new location in either part is
	\begin{equation}\label{eq.pathp2} z^*=\opts{1}{P_t}+\opts{1}{P_{n-t}}=\frac{1}{4}(2t^2+n^2-2nt-a)\end{equation} with \(a\in\{0,1,2\}\). The goal of the interdictor is to choose \(t\) such that \(z^*\) is maximal. For a given case, i.e. where \(n\) and \(a\) are fixed, we can reduce equation~\ref{eq.pathp2} to the following expression \[t^2-nt=\left(t-\frac{n}{2}\right)^2-\frac{n^2}{4}\] for the computation of the maximum. Since we aim to maximize the latter expression, it can again be reduced to \[\left(t-\frac{n}{2}\right)^2.\]  For \(t\in\{1,\dots,n-1\}\), the maximum is found at \(t_1=1\) or \(t_2=n-1\), which proves the claim.
\end{proof}
This result allows to expand the considerations to \(B\leq n-1\). 
\begin{lemma}
	Let a  path~\(P=\left(v_1,e_1,v_2,\dots,e_{n-1},v_n\right)\) be given with \(\ell\equiv 1\), \(b\equiv 1\). The optimal interdiction strategy under an interdiction budget \(B\leq n-1\) is to successively interdict the edges incident to leaves, thus resulting in \(B\) single vertices and one path of length \(n-1-B\).
\end{lemma}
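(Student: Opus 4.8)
The plan is to reduce the statement to a purely combinatorial optimisation over integer compositions of $n$. Since $b\equiv1$ and $p=B+1$, any feasible strategy $\gamma$ deletes a set of $j\le B$ edges and splits the path into $j+1$ sub‑paths $C_1,\dots,C_{j+1}$ with $\sum_i\card{C_i}=n$. A component without a facility forces an infinite distance, and there are $p=B+1\ge j+1$ facilities, so an optimal locator puts at least one facility in every component; deleting strictly fewer than $B$ edges therefore only leaves the locator with surplus facilities, which can never help the interdictor (deleting more edges only increases distances, so the locator's minimum objective is monotone in the deleted edge set). Writing $g(k)\defeq\opts{1}{P_k}=\floor{k^2/4}$ for the optimal $1$-median value on a path with $k$ vertices (the value computed in the proof of Lemma~\ref{lem:b1path}), the interdictor's problem becomes: over all $k_1+\dots+k_{B+1}=n$ with $k_i\ge1$, maximise $\sum_{i=1}^{B+1}g(k_i)$. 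I will show the maximiser is $(n-B,1,\dots,1)$, which corresponds exactly to stripping $B$ leaf edges and is feasible since $B\le n-1$.

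The heart of the argument is the inequality
\[
 g(a)+g(b)\ \le\ g(a+b-1)\qquad\text{for all }a,b\ge 1 .
\]
It is trivial when $a=1$ or $b=1$, and for $a,b\ge 2$ I would derive it from the fact that the increments of $g$ are nondecreasing: a one-line computation gives $g(k)-g(k-1)=\floor{k/2}$, hence $g(a+b-1)-g(b)=\sum_{t=b+1}^{a+b-1}\floor{t/2}\ge\sum_{t=2}^{a}\floor{t/2}=g(a)-g(1)=g(a)$. Granting this, an exchange argument finishes the combinatorial step: if a composition has two parts $k_i,k_j\ge2$, replace them by $k_i+k_j-1$ and a new part $1$; then $\sum_\ell g(k_\ell)$ does not decrease while the number of parts exceeding $1$ strictly drops, so after finitely many steps we reach $(n-B,1,\dots,1)$, which is therefore optimal with value $g(n-B)$. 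I expect this convexity-type inequality to be the main obstacle; once it is in place, everything else is bookkeeping.

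Finally I would upgrade the combinatorial bound to a bound valid for every strategy and check that the claimed strategy is tight. For arbitrary $\gamma$ deleting $j\le B$ edges with components $C_1,\dots,C_{j+1}$, cut $B-j$ further edges of $P$ arbitrarily to obtain $B+1$ sub-paths $D_1,\dots,D_{B+1}$, a composition of $n$, and let $X$ consist of the $1$-median of each $D_i$; then $\card{X}=p$, and since each $D_i$ is a connected sub-path of $G(\gamma)$ we have $d_{G(\gamma)}(v,X)\le d_{D_i}(v,\mathrm{med}(D_i))$ for $v\in D_i$, so $\min_{X}f_{G(\gamma)}(X)\le\sum_i g(\card{D_i})\le g(n-B)$ by the step above. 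On the other hand, the strategy $\gamma^*$ that deletes $e_1,\dots,e_B$ (or, equally, any $B$ successively appearing leaf edges) yields $B$ isolated vertices and the path on $v_{B+1},\dots,v_n$, i.e.\ a path of length $n-1-B$; here $G(\gamma^*)$ has exactly $B+1$ components, so the locator must spend one facility on each, the $B$ singletons contribute $0$ and the remaining path contributes $g(n-B)$, giving $\min_X f_{G(\gamma^*)}(X)=g(n-B)$. Hence $\gamma^*$ attains the maximum and is an optimal interdiction strategy, as claimed.
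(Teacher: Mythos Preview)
Your proof is correct and follows essentially the same route as the paper: both are exchange arguments that reduce to the single--edge case. The paper takes an optimal strategy with a non--leaf cut, isolates the surrounding component $P_{qs}$, and invokes Lemma~\ref{lem:b1path} to slide that cut to the boundary of $P_{qs}$; you instead abstract to integer compositions of $n$ into $B+1$ parts and prove the equivalent inequality $g(a)+g(b)\le g(a+b-1)$ directly from the nondecreasing increments $g(k)-g(k-1)=\lfloor k/2\rfloor$. The two key inequalities are the same statement in different clothing, so the approaches are not genuinely different. One point in your favour: you explicitly handle strategies that delete fewer than $B$ edges (via monotonicity of the locator's optimum under edge deletion, and again via the refinement to $D_1,\dots,D_{B+1}$ in your last paragraph), whereas the paper's shifting argument tacitly works only among strategies that spend the full budget.
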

\begin{proof}
	Consider an optimal interdiction strategy~\(\gamma '\), where at least one of the interdicted edges does not cut off a leaf as depicted in Figure~\ref{fig:pathdetail}. Let \(e_r\) be the described edge.
		\begin{figure}
			\centering
		\begin{tikzpicture}
		[leaf/.style={shape=circle,draw=blue, minimum size=2cm},
		vertex/.style={shape=circle,draw=black, minimum size=.7cm, inner sep=1pt},
		int/.style={color=red, thick}]
			\foreach \p in {-.5,2.5,7,10} \node (\p) at (\p,0) {\(\dots\)};
			\foreach \t/\s in {1/q,4/r,5.5/{r+1},8.5/s} \node[vertex] (\s) at (\t,0) {\tiny\(v_{\s}\)};
		\draw[int, label={u}] (0,0) -- (q);
		\draw[int] (r) -- (r+1);
		\draw[int] (s) -- (9.5,0);
		\draw (q) -- (2,0);
		\draw (r+1) -- (6.5,0);
		\draw (3,0) -- (r);
		\draw (7.5,0) -- (s);
		\end{tikzpicture}
		\caption{Detail of interdicted path \(P\); interdicted edges are depicted in red.}\label{fig:pathdetail}
	\end{figure}
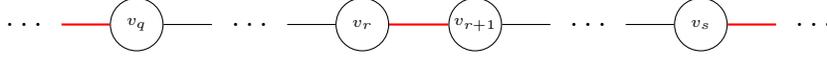
	Given there exists an edge with the stated properties, there also exist sets of vertices \(V_q=\{v_{q}\in V\colon q<r,  \deg(v_q)=1\}\) and \(V_s=\{v_{s}\in V\colon r<s,  \deg(v_s)=1\}\). Let \(v_q\in V_q\) be the vertex with the biggest index and \(v_s\in V_s\) the vertex with the smallest index. This requirement ensures that the component \(P_{qs}= (v_q,e_q,\dots,e_{s-1},v_s)\) of the original path is only interdicted once at \(e_r\).\\
	Now consider \(P_{qs}\), which is interdicted at \(e_r\) with strategy \(\gamma '\), resulting in two new paths. Using the result of Lemma~\ref{lem:b1path}, the optimal objective function value for the interdiction of path \(P_{qs}\) does not decrease by interdicting \(e_q\) instead of \(e_r\). Therefore, successively using this method of shifting the interdicted edges to the leftmost edge of the respective components will also not decrease the overall objective value of \(\gamma '\) yielding an optimal interdiction strategy \(\gamma^*\) as stated. 
\end{proof}
\subsection{Paths with arbitrary lengths}\label{subsec.paths.arbitr}
For the remainder of the section, we assume an arbitrary length function \(\ell\) to be given.\\
One obvious strategy is to interdict all edges successively. In each step at a time, we compute the optimal objective function value for the locator by solving two median location problems on the remaining paths after the current interdiction step. Evaluating over all obtained objective function values yields the best edge to interdict. We present an approach which efficiently iterates over all edges by using an interesting structure of a matrix, which helps computing the locators objective function values. Let a path  \(P=\left(v_1,e_1,v_2,\dots,e_{n-1},v_n\right)\) be given. The median location is found at vertex \(v_{\lceil \nicefrac{n}{2}\rceil} \) for \(n\) odd or at vertex \(v_{\nicefrac{n}{2}}\) or \(v_{\nicefrac{n}{2}+1}\) for \(n\) even. As stated in Section ~\ref{sec:preliminaries}, the objective function value for the \(1-\)median problem on the given path is determined by the total number of times, each edge is crossed to reach all vertices from the median location. Given the structure of a path, these numbers are bounded by \(\lfloor\nicefrac{n}{2}\rfloor\) if \(n\) is odd and by \(\nicefrac{n}{2}\) if \(n\) is even. More precisely, these bounds hold for the edges \(e_{\lfloor\nicefrac{n}{2}\rfloor}\) and \(e_{\lceil\nicefrac{n}{2}\rceil}\) incident to the median location (\(n\) odd) or the edge \(e_{\nicefrac{n}{2}}\) (\(n\) even), respectively. Furthermore, this number decreases by one the closer the edges are to the leaves of the path. Example~\ref{ex:calc.path} shows the case for a path of length \(7\).

\begin{example}\label{ex:calc.path}
	Let \(P=(v_1,e_1,\dots,e_6,v_7)\) a path with length values \(\ell_i, i=\{1,\dots,6\} \) as depicted. With \(n\) odd and the observations above, we get that the median is located at \(v_{\lceil \nicefrac{n}{2}\rceil}=v_4\). Furthermore, we can determine the number of times \(s_i\) the edge \(e_i\) is represented in the objective function value (depicted in green). 
	\begin{center}
		\begin{tikzpicture}
			[leaf/.style={shape=circle,draw=blue, minimum size=2cm},
			vertex/.style={shape=circle,draw=black, minimum size=.7cm, inner sep=1pt},
			int/.style={color=red, thick}
			,amount/.style={above, ForestGreen, font=\scriptsize}
			]
				\foreach \t/\s in {1/1,2.5/2,4/3,7/5,8.5/6,10/7} \node[vertex] (\s) at (\t,0) {\tiny\(v_{\s}\)};
				\node[blue, vertex, draw=blue](4) at (5.5,0) {\tiny\(v_4\)};
				\foreach \i/\j in {1/1,2/2,3/3,3/4,2/5,1/6}{
					\pgfmathtruncatemacro\result{\j +1} 
						\draw (\j) -- node[label={[ForestGreen, font=\scriptsize]above:\(\i\)}, label={[font=\scriptsize]below:\(l_\j\)}, inner sep=0pt] {} (\result);
				}
		\end{tikzpicture}
	\end{center}
\end{example}
The information of how often an edge length contributes to the objective function value can be stored in a vector \(S\in\N^{n-1}\), where each entry represents one edge. Multiplying \(S\) with the length vector \(\ell\) yields the optimal objective function value for the path.\\
We use this scheme for the construction of the matrix calculating the objective function values for different interdicted edges. Assume that some edge~\(e_t\), \(t\in \{1,\dots,n-1\}\) gets interdicted. This results in the two paths \(P_{t,1}=(v_1,e_1,v_2,\dots,e_{t-1},v_t)\) and  \(P_{t,2}=(v_{t+1},e_{t+1},\dots,e_{n-1},v_n)\), for which the objective function values can be calculated separately as stated via the vectors \(S_{t,1}=(s_1,\dots,s_{t-1})\in \) for \(P_1\) and \(S_{t,2}=(s_{t+1},\dots,s_{n-1})\) for \(P_2\). The union yields a new vector \(\mathcal{S}_t=(S_{t,1},0,S_{t,2})\). Assuming that we only interdict once, we can proceed to build \(\mathcal{S}_t\) for all edges \(e_t, t=1,\dots,n\) sequentially. The matrix \(\mathscr{S}=(\mathcal{S}_t)_t\in \N^{(n-1)\times(n-1)}\) can again be multiplied with \(\ell\). Evaluating for the biggest objective function value solves the \(1-\)interdiction median problem. Example~\ref{ex:calc.pathcomplete} shows the matrix for the path of Example~\ref{ex:calc.path}.
\begin{example}\label{ex:calc.pathcomplete}
	Let \(P\) be the path of Example~\ref{ex:calc.path}. Furthermore, let the length vector \(\ell\) be given. The matrix obtained via the presented method is as follows:	
	\renewcommand*{\arraystretch}{.5}
	\[\begin{pmatrix}
		0 & 1 & 2 & 3 & 2 & 1\\
		1 & 0 & 1 & 2 & 2 & 1\\
		1 & 1 & 0 & 1 & 2 & 1\\ 
		1 & 2 & 1 & 0 & 1 & 1\\
		1 & 2 & 2 & 1 & 0 & 1\\
		1 & 2 & 3 & 2 & 1 & 0
	\end{pmatrix}\]
\end{example}
As stated, the matrix \(\mathscr{S}\) is of size \((n-1)\times (n-1)\), where each row \(i\in\{1,\dots, n-1\}\) represents the objective function value if edge \(e_i\) gets interdicted. Therefore, the procedure explained runs in time \(\Ol(n^2)\) since the multiplication of matrix \(\mathscr{S}\) with the given length vector \(\ell\) is of stated time. We want to mention, that an increase of interdiction steps adds factor \(n\) to the running time for each single interdiction. This is due to the fact, that we need to consider every possible combination for the edges, that get interdicted.
\section{Interdicting a tree}\label{sec.tree}
In this section, we show, how to interdict a tree \(T=(V,E)\) with \(\ell\equiv 1\), \(b\equiv 1\), \(B=1\) and \(p=B+1\).

\begin{theorem}\label{thm:interdictionLeafOnTree}
	For \(T=(V,E)\) with \(\ell\equiv 1\), \(b\equiv 1\), \(B=1\) and \(p=B+1\), the optimal interdiction strategy is as follows: among all leaves of \(T\), find one, that has the least distance to at least one optimal solution of the  \(1-\)median location problem on \(T\). Interdict the edge incident to this leaf.
\end{theorem}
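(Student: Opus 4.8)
The plan is to reduce the interdictor's decision to the choice of one edge of $T$, to bound from above the locator value that results from \emph{any} edge, and then to show that the prescribed leaf cut attains that bound.

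\textbf{Reformulation.} Deleting one edge $e$ from the tree $T$ leaves exactly two subtrees $A_e,B_e$ as the components of $G(\gamma)$. Since $p=B+1=2$ and every vertex must be covered, the locator is forced to place exactly one facility in each of $A_e,B_e$; inside a subtree the best position is a $1$-median of that subtree (distances within a subtree coincide with distances in $T$, and the value is computed by Goldman's algorithm). Hence interdicting $e$ yields locator value $g(e):=\opts{1}{A_e}+\opts{1}{B_e}$, and the interdictor seeks an edge maximizing $g$. I will use the characterization underlying Goldman's algorithm: a vertex $m$ is a $1$-median of a tree on $n$ vertices iff every component of $T-m$ has at most $n/2$ vertices; consequently a tree has either a unique $1$-median, or exactly two, which are then adjacent and occur only when $n$ is even. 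Finally put
\[
\delta\ :=\ \min\{\,d(\ell,m)\ :\ \ell\text{ a leaf of }T,\ m\text{ a }1\text{-median of }T\,\},
\]
and fix a leaf $\ell^{\circ}$ attaining it; the theorem asserts that interdicting the edge incident to $\ell^{\circ}$ is optimal.

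\textbf{An upper bound for every edge.} Fix an edge $e=\{u,w\}$ and a $1$-median $m^{*}$ of $T$ with $m^{*}\in A_e$, so $w\in B_e$ and $d(w,m^{*})\ge1$. Bounding $g(e)$ by placing one facility at $m^{*}$ (for $A_e$) and one at $w$ (for $B_e$), and using $d(v,m^{*})=d_{B_e}(v,w)+d(w,m^{*})$ for $v\in B_e$, the $B_e$-contributions cancel and leave
\[
g(e)\ \le\ \sum_{v\in A_e}d(v,m^{*})+\sum_{v\in B_e}d_{B_e}(v,w)\ =\ \opts{1}{T}-\card{B_e}\cdot d(w,m^{*}).
\]
Now $B_e$ contains a leaf $\ell'$ of $T$, with $d(\ell',m^{*})=d(w,m^{*})+d_{B_e}(w,\ell')\le d(w,m^{*})+\card{B_e}-1$; combining $\delta\le d(\ell',m^{*})$ with the trivial inequality $(\card{B_e}-1)(d(w,m^{*})-1)\ge0$ yields $\card{B_e}\cdot d(w,m^{*})\ge\delta$, hence $g(e)\le\opts{1}{T}-\delta$ for every edge $e$. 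If $T$ has two $1$-medians $m^{*},m^{**}$ one further unit can be extracted: when $e=\{m^{*},m^{**}\}$ one takes $m^{*}$ as the $1$-median in $A_e$, gets $d(w,m^{*})=1$, $\card{B_e}=n/2$, and since $B_e$ contains a leaf within distance $n/2-1$ of $m^{**}$ also $n/2\ge\delta+1$; when $e\ne\{m^{*},m^{**}\}$ both $1$-medians lie in $A_e$, and taking $m^{*}$ to be the one \emph{farther} from $w$ gives $d(w,m^{*})\ge2$, whence the identity $\card{B_e}\cdot d(w,m^{*})=\bigl(\card{B_e}+d(w,m^{*})-1\bigr)+(\card{B_e}-1)(d(w,m^{*})-1)$ together with $\delta\le\card{B_e}+d(w,m^{*})-2$ gives $g(e)\le\opts{1}{T}-\delta-1$.

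\textbf{The leaf cut attains the bound.} Let $e^{\circ}$ be the edge incident to $\ell^{\circ}$; interdicting it leaves the components $\{\ell^{\circ}\}$ and $T-\ell^{\circ}$, so $g(e^{\circ})=\opts{1}{T-\ell^{\circ}}$. If $T$ has a unique $1$-median $m^{*}$, then $\ell^{\circ}$ is a nearest leaf to $m^{*}$, and deleting it shrinks exactly one branch at $m^{*}$; a short check with the branch-size bound shows $m^{*}$ is still a $1$-median of $T-\ell^{\circ}$, so $\opts{1}{T-\ell^{\circ}}=\opts{1}{T}-d(\ell^{\circ},m^{*})=\opts{1}{T}-\delta$. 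If $T$ has two $1$-medians $m^{*},m^{**}$ and (say) $\ell^{\circ}$ is nearest to $m^{*}$, then $\ell^{\circ}$ must lie on the $m^{*}$-side of the edge $\{m^{*},m^{**}\}$ (otherwise $d(\ell^{\circ},m^{**})=d(\ell^{\circ},m^{*})-1<\delta$, contradicting the choice of $\delta$); deleting $\ell^{\circ}$ then shrinks the branch of $m^{**}$ toward $m^{*}$, so $m^{**}$ becomes the (now unique) $1$-median of $T-\ell^{\circ}$ and $\opts{1}{T-\ell^{\circ}}=\opts{1}{T}-d(\ell^{\circ},m^{**})=\opts{1}{T}-\delta-1$. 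In either case $g(e^{\circ})$ equals the upper bound of the previous step, so $g(e)\le g(e^{\circ})$ for all $e$, which is the theorem.

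\textbf{Main obstacle.} The telescoping inequality and the arithmetic with $(\card{B_e}-1)(d(w,m^{*})-1)$ are routine. The delicate part — and the one I expect to take the most care — is the last step: controlling how the $1$-median(s) of $T$ move after the nearest leaf is removed, in particular the off-by-one correction in the two-$1$-median case, which forces a case distinction on the position of the interdicted edge relative to the one or two $1$-medians. The degenerate trees ($n\le 2$, where a leaf may itself be a $1$-median) are checked directly.
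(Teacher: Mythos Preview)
Your argument is correct and rests on the same key inequality as the paper, namely
\(g(e)\le \opts{1}{T}-\card{B_e}\cdot d(w,m^*)\) obtained by placing the two facilities at a \(1\)-median and at the cut vertex \(w\). Where you diverge from the paper is in how you exploit this bound. The paper fixes one median \(r\), spends a case analysis (on whether \(d(r,u_2^*)=1\) and whether \(T_{u_2^*}\) is a path) to show that \(\min_e d(r,u_2)\card{T_{u_2}}\) is attained at a leaf, and then compares an arbitrary edge against the nearest-leaf cut via two further cases using a slack of \(-1\). You bypass all of this with the single arithmetic observation \((\card{B_e}-1)(d(w,m^*)-1)\ge 0\), which together with \(\delta\le\card{B_e}+d(w,m^*)-1\) immediately yields \(\card{B_e}\cdot d(w,m^*)\ge\delta\); you then sharpen to \(\delta+1\) in the two-median case by choosing the \emph{farther} median. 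On the achievability side, the paper argues that the median after the leaf cut either stays at \(r\) or moves to a neighbour, giving a lower bound with a \(-1\) slack; you instead track exactly where the \(1\)-median of \(T-\ell^\circ\) sits, splitting cleanly into the unique-median case (median stays put, value \(\opts{1}{T}-\delta\)) and the two-median case (median becomes the other one, value \(\opts{1}{T}-\delta-1\)), and match the corresponding upper bounds exactly. Your route is shorter and avoids the paper's somewhat delicate subcase \(d(r,u_2^*)=1\); the price is that you must verify the branch-size criterion for the median of \(T-\ell^\circ\) in each case, which you correctly flag as the main point of care.
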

\begin{proof}
	Since the proof is constructive, consider the exemplary tree in Figure~\ref{fig:lemtree} which illustrates the used structures.
	
	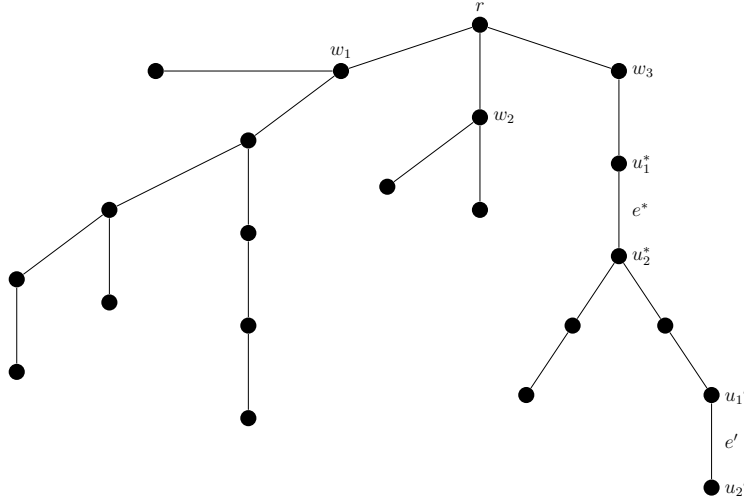
\begin{figure}
		\centering
		\resizebox{10cm}{!}{%
		\begin{tikzpicture}
		\begin{scope}
		[every node/.style={shape=circle,fill=black, inner sep=3.5}]
			\foreach \x/\y/\z in {0/2.5/1,
										0/4.5/2,
										2/4/3,
										2/6/4,
										3/9/5,
										5/1.5/6,
										5/3.5/7,
										5/5.5/8,
										5/7.5/9,
										7/9/10,
										8/6.5/11,
										10/6/12,
										10/8/13,
										10/10/14,
										11/2/15,
										12/3.5/16,
										13/5/17,
										13/7/18,
										13/9/19,
										14/3.5/20,
										15/0/21,
										15/2/22} \node (\z) at (\x,\y) {};
			\end{scope}
			\node [label={[label distance=0cm]90:\(r\)}] at (14){};
			\node [label={[label distance=0cm]0:\(u_1^*\)}] at (18){};
			\node [label={[label distance=0cm]0:\(u_2^*\)}] at (17){};
			\node [label={[label distance=0cm]0:\({u_1}'\)}] at (22){};
			\node [label={[label distance=0cm]0:\({u_2}'\)}] at (21){};
			\node [label={[label distance=0cm]90:\(w_1\)}] at (10){};
			\node [label={[label distance=0cm]0:\(w_2\)}] at (13){};
			\node [label={[label distance=0cm]0:\(w_3\)}] at (19){};
			
			\node [label={[label distance=0cm]0:\(e^*\)}] at (13,6){};
			\node [label={[label distance=0cm]0:\(e'\)}] at (15,1){};
			
			\draw (1) -- (2) -- (4) -- (3);
			\draw (9) -- (10) -- (5);
			\draw (14) -- (10);
			\draw (4) -- (9) -- (8) -- (7) -- (6);
			\draw (11) -- (13) -- (12) ;
			\draw (15) -- (16) -- (17) ;
			\draw (13) -- (14) -- (19) -- (18) -- (17) -- (20) -- (22) -- (21) ;
		\end{tikzpicture}
	}
		\caption{Exemplary tree for Theorem~\ref{thm:interdictionLeafOnTree}.}\label{fig:lemtree}
	\end{figure}
	Let \(r\in V(T)\) be an optimal solution to the \(1-\)median location problem on \(T\). Also, let \(T\) be rooted in \(r\). Let \(f=(v_1,v_2)\in E(T)\) be an edge with \(v_2\) being a leaf that fulfills \begin{align}\label{lem.mindist}d(r,v_2)=\min_{\substack{r^*\in X^*\\ l \text{ leaf}}} d(r^*,l) .\end{align}
	For all neighbors \(w\in N(r)\), it holds that \begin{align}\label{ineq.neighbor}\left|T_{w}\right|\leq\left|T-T_{w}\right|. \end{align} 
	Assume, that \(\card{T_w}>\card{T-T_w}\). Then, \(w\) would yield a better objective function value for the \(1-\)median location problem than \(r\), which is a contradiction to the assumption, that \(r\) is optimal for said problem.	
	Now, for every edge \(e=(u_1,u_2)\in E(T)\) with \(d(r,u_1)<d(r,u_2)\), it is 
	\begin{align}\label{lem:ineq} \opts{2}{T-e}\leq \opts{1}{T}-(d(r,u_2)\cdot \card{T_{u_2}} ). \end{align}
	This is, since placing a second location in the tree \(T_{u_2}\) saves at least \(\card{T_{u_2}}\) times the distance \(d(r,u_2)\) in comparison to the solution of the \(1-\)median problem on the original tree \(T\).

	We aim at finding \(e^*=(u_1^*,u_2^*)\) such that \(d(r,u_2^*)\cdot \card{T_{u_2^*}}=\min_{e\in E(T)}d(r,u_2)\cdot \card{T_{u_2}}\). This leads to the biggest right hand side of inequality~\eqref{lem:ineq}, thus providing an upper bound for the optimal objective function value \(\opts{2}{T-e}\), which the interdictor maximizes. We claim, that \(u_2^*\) is a leaf. 
	
	\begin{description}
	\item[Case 1:] \(d(r,u_2^*)>1\).
	Suppose, \(u_2^*\)  is not a leaf and let instead \(e'=(u_1',u_2')\in T_{u_2^*}\) with \(u_2'\) being a leaf. Then, \(\card{T_{u_2^*}}\geq d(r,u_2')-d(r,u_2^*)+1\). Using this, we get 
	\begin{align}
		d(r,u_2^*)\cdot\card{T_{u_2^*}}&\geq d(r,u_2^*) \cdot d(r,u_2')-d^2(r,u_2^*)+d(r,u_2^*) \nonumber\\
		&= d(r,u_2')\left(d(r,u_2^*)-\frac{d(r,u_2^*)(d(r,u_2^*)-1)}{d(r,u_2')} \right)\nonumber\\ \label{lem:leaf}
		&> d(r,u_2')\left(d(r,u_2^*)-(d(r,u_2^*)-1) \right) \\ 
		&= d(r,u_2') = d(r,u_2')\cdot\card{T_{u_2'}}\nonumber
	\end{align}
	where \eqref{lem:leaf} follows from the fact that \(\frac{d(r,u_2^*)}{d(r,u_2')}<1\) under the assumptions made. This is a contradiction to  \(d(r,u_2^*)\cdot \card{T_{u_2^*}}=\min_{e\in E(T)}d(r,u_2)\cdot \card{T_{u_2}}\).	
		\item[Case 2:] \(d(r,u_2^*)=1\). In this case, \eqref{lem:leaf} does not hold. We need to distinguish between two cases.
		\begin{description}
			\item[Case 2.1:] \(u_2^*\) is a leaf. \(\checkmark\)
			\item[Case 2.2:] Assume, that \(u_2^*\) is not a leaf. Again, we distinguish between two cases.
			\begin{description}
				\item[Case 2.2.1:] Assume, that \(T_{u_2^*}\) is not a path. Let \(u_2' \in T_{u_2^*}\) be a leaf. There is at least one vertex \(u''\in T_{u_2^*}\) for which \(d(u'')\geq 2\). Thus, it holds that 
				\begin{align*}
					d(u_2^*, u_2')&\leq \card{T_{u_2^*}}-2\\
					\iff d(r, u_2')-1 & \leq \card{T_{u_2^*}}-2\\
					\iff d(r, u_2') & < \card{T_{u_2^*}}
				\end{align*}
				Since \(d(r,u_2^*)=\card{T_{u'}}=1\), we get \(d(r, u') \cdot \card{T_{u'}} < d(r,u_2^*) \cdot {T_{u_2^*}}\), which is a contradiction to \(d(r,u_2^*)\cdot \card{T_{u_2^*}}=\min_{e\in E(T)}d(r,u_2)\cdot \card{T_{u_2}}\).	
				\item[Case 2.2.2:] Assume \(T_{u_2^*}\) is a path. Then, we find that \\ \(\min_{e\in E(T)}d(r,u_2)\cdot \card{T_{u_2}}=d(r,u_2^*)\cdot \card{T_{u_2^*}}=d(r,u_2')\cdot \card{T_{u_2'}}\) with \(u_2'\in T_{u_2^*}\) being a leaf.
			\end{description}
		\end{description}
	\end{description}
	Coming back to inequality~\eqref{lem:ineq} we conclude, that the right hand side is biggest, if for \(e^*=(u_1^*,u_2^*)\), vertex \(u_2^*\) is a leaf. \\
	We now want to use the term on the right side of inequality~\eqref{lem:ineq} as an upper bound for the objective function value of the interdiction problem. Therefore, let \(f=(v_1,v_2)\) be the edge from Assumption~\eqref{lem.mindist}, which gets interdicted. Then, we find the following.\\
	If the optimal solution \(r\) to the \(1-\)median problem on \(T\) is part of the optimal solution of the \(2-\)median problem on \((T-f)\) -- together with the leaf \(v_2\) -- then it holds that
	\[\opts{2}{T-f}=\opts{1}{T}-d(r,v_2).\]
	 Now suppose, that the optimal solution \(r\) to the \(1-\)median problem on \(T\) is not part of the optimal solution of the \(2-\)median problem on \((T-f)\) anymore. That means there exists \( s\in N(r)\) in the neighborhood of \(r\) which is part of the optimal solution together with leaf \(v_2\). Using the fact, that \(d(r,s)=1\), we get that
	\begin{align}\label{lem.bound}
		\opts{2}{T-f}= \opts{1}{T} - d(r,v_2)-1.
	\end{align}
 Note, that \(\opts{2}{T-f}\) does not depend on the specific choice of \(f\), meaning the objective function value is the same for all \(f\) fulfilling \eqref{lem.mindist}. In any case it holds true that 	\begin{align}\label{lem.bound2}
 \opts{2}{T-f} \geq \opts{1}{T} - d(r,v_2)-1.
 \end{align}
 Now we distinguish between two cases.
\begin{description}
	\item[Case 1.] Let \(f'=(v'_1,v'_2)\) such that \(v'_2\) is a leaf, but not fulfilling assumption~\eqref{lem.mindist}. Therefore, \(d(r,v_2')>d(r,v_2)\). Then we get that 
	\begin{align}
	\label{lem.case11}	\opts{2}{T-f'}&\leq \opts{1}{T}-d(r,v'_2)\\
	\label{lem.case12}	&\leq \opts{1}{T}-d(r,v_2)-1\\ 
	\label{lem.case13}	&\leq \opts{2}{T-f}
	\end{align}
	where~\eqref{lem.case11} follows from inequality~\eqref{lem:ineq}, \eqref{lem.case12} follows from the estimation of the distances in the current case, and \eqref{lem.case13} is due to inequality~\eqref{lem.bound2}. 
	\item[Case 2.] Let \(f'=(v'_1,v'_2)\) such that \(v'_2\) is not a leaf. Then, with \(q_2\in T_{v'_2}\) being a leaf, it is
	\begin{align}
	\label{lem.case21}	\opts{2}{T-f'}&\leq \opts{1}{T}-d(r,v'_2)\cdot \card{T_{v'_2}}\\
	\label{lem.case22}	&\leq \opts{1}{T}-d(r,q_2)-1 \\ 
	\label{lem.case23}	&\leq \opts{1}{T}-d(r,v_2)-1 \\
	\label{lem.case24}	&\leq \opts{2}{T-f}.
	\end{align}
	The first line~\eqref{lem.case21} again follows from inequality~\eqref{lem:ineq}, \eqref{lem.case22} is due to the definition of \(q_2\), \eqref{lem.case23} follows from condition~\eqref{lem.mindist} and the last estimation~\eqref{lem.case24} follows from inequality~\eqref{lem.bound2}.
\end{description} 
	Combining our results of cases 1 and 2, we find that choosing an edge \(f\) for the interdiction step, which fulfills condition~\eqref{lem.mindist} yields the optimal objective function value for the interdictor, namely the biggest objective function value \(\opts{2}{T-f}\).
	\end{proof}

\begin{observation}
	In case 2.2.2, where \(d(r,u_2^*)=1\) and \(T_{u_2^*}\) is a path, we have seen that \(\min_{e\in E(T)}d(r,u_2)\cdot \card{T_{u_2}}=d(r,u_2^*)\cdot \card{T_{u_2^*}}=d(r,u_2')\cdot \card{T_{u_2'}}\) for the leaf \(u_2'\). We stated, that we choose the edge connecting the leaf for our interdiction strategy. In fact, this is mandatory in every case, except for \(u_2'\) being the successor of \(u_2^*\). Let  \(\card{T_{u_2^*}}>2\). Then, if the edge \(e^*\) is interdicted, the locator can improve the objective function value of the remainder of the path, which is not connected to the main tree containing \(r\) anymore, with the placement of the new location. 
\end{observation}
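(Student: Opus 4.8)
The plan is to verify the observation within the configuration of Case~2.2.2 of the proof of Theorem~\ref{thm:interdictionLeafOnTree}: $T$ is rooted at an optimal $1$-median $r$, the vertex $u_2^*\in N(r)$ has $d(r,u_2^*)=1$, and $Q\defeq T_{u_2^*}$ is a path $u_2^*=a_1,a_2,\dots,a_k=u_2'$ on $k\defeq\card{T_{u_2^*}}$ vertices, so $d(r,a_i)=i$. Put $e^*=\{r,a_1\}$ and let $f=\{a_{k-1},a_k\}$ be the edge of $Q$ incident to the leaf $a_k$. I want to show (i) if $k>2$ then $\opts{2}{T-e^*}<\opts{2}{T-f}$, so interdicting $e^*$ is strictly worse for the interdictor and the leaf edge is forced; and (ii) if $k=2$ the two values coincide, so the leaf edge is not forced. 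I will use two facts already available: since $p=B+1=2$, for any edge $e$ the quantity $\opts{2}{T-e}$ is the sum of the $1$-median values of the two components of $T-e$; and a path on $\kappa$ vertices with unit lengths has $1$-median value $\floor{\kappa^2/4}$ (the formula derived in the proof of Lemma~\ref{lem:b1path}).

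For the upper bound, deleting $e^*$ splits $T$ into $Q$ and $T'\defeq T-Q$; placing one facility at $r$ inside $T'$ and the other at the centre of the path $Q$ gives $\opts{2}{T-e^*}\le\bigl(\opts{1}{T}-\sum_{i=1}^{k}i\bigr)+\floor{k^2/4}=\opts{1}{T}-\tfrac{k(k+1)}{2}+\floor{k^2/4}$, and the slack here --- which is what ``the locator can improve'' refers to --- comes exactly from $Q$ having an interior once $k>2$. For the lower bound, deleting $f$ isolates the single vertex $a_k$, so $\opts{2}{T-f}=\opts{1}{T-a_k}$; since $a_k$ is a leaf of $T$, its deletion moves the $1$-median by at most one edge (the stability fact underlying inequality~\eqref{lem.bound2}), whence $\opts{2}{T-f}\ge\opts{1}{T}-d(r,a_k)-1=\opts{1}{T}-k-1$.

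Comparing, claim~(i) reduces to the elementary inequality $\tfrac{k(k+1)}{2}-\floor{k^2/4}>k+1$, which holds with room to spare for $k\ge4$ (the left-hand side is at least $\tfrac{k(k+2)}{4}$) but is an equality at $k=3$. The boundary case $k=3$ is the one genuine obstacle, and I would handle it by sharpening the estimates: if the slack in $\opts{2}{T-f}\ge\opts{1}{T}-k-1$ is not attained, strictness is immediate; if it is attained, the $1$-median of $T-a_k$ sits at a neighbour $s$ of $r$, and then a parallel reconfiguration (the vertex $s$ serving $T-Q$ strictly better than $r$ does, or, in the degenerate case $s=a_1$, a direct count on $T-Q$) lowers the upper bound on $\opts{2}{T-e^*}$ by at least one, which is exactly what is needed for $\opts{2}{T-e^*}<\opts{2}{T-f}$. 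For claim~(ii), when $k=2$ the path $Q$ has no interior vertex: interdicting $e^*$ detaches the pendant path $\{a_1,a_2\}$ and interdicting $f$ detaches the single vertex $\{a_2\}$, and a direct component-wise computation shows both yield the objective $\opts{1}{T}-2$ (with $r$ the common best location of the remaining facility whenever it is still a $1$-median of the reduced tree), so the leaf edge is not forced.

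In short, once the two bounds are set up the argument is brief; the only delicate point is the boundary value $k=3$, where the generic bounds are tight and one must track how the $1$-median reacts to deleting a single leaf versus deleting the whole pendant path $Q$ --- and observe that whenever that $\pm1$ reaction helps the locator after the single-leaf deletion, it helps at least as much after removing $Q$. Everything else --- the component-wise splitting of the $2$-median (valid because $p=2$), the closed form for the $1$-median of a path, and leaf-deletion stability of the $1$-median --- is already in place.
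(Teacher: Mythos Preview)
Your approach is considerably more detailed than the paper's: the observation in the paper carries no separate proof, only the final sentence noting that for \(k>2\) the locator can place the second facility at an interior vertex of the detached path and thereby do strictly better than the bound \eqref{lem:ineq}. You instead set up an explicit upper bound for \(\opts{2}{T-e^*}\) and the lower bound \eqref{lem.bound2} for \(\opts{2}{T-f}\) and compare them. For claim~(i) this works cleanly when \(k\ge 4\); at \(k=3\) your two bounds meet, and your ``parallel reconfiguration'' sketch is the right idea but should be made explicit. One clean way: since \(|T'|\ge k=3\) (because \(|T_{u_2^*}|\le|T-T_{u_2^*}|\)), for every placement \(x\in V(T-a_3)\) the contribution of \(a_1,a_2\) plus the \(T'\)-part is at least \(\opts{1}{T'}+3\), so \(\opts{2}{T-f}=\opts{1}{T-a_3}\ge \opts{1}{T'}+3>\opts{1}{T'}+2=\opts{2}{T-e^*}\).

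Claim~(ii), however, is wrong as stated: the two values need \emph{not} coincide when \(k=2\). Take \(T\) the \(5\)-path \(a_2\!-\!a_1\!-\!r\!-\!v_2\!-\!v_3\); here \(r\) is the unique \(1\)-median, \(u_2^*=a_1\), and \(T_{u_2^*}=\{a_1,a_2\}\) is a path with \(k=2\). Then \(\opts{2}{T-e^*}=\opts{1}{\{r,v_2,v_3\}}+\opts{1}{\{a_1,a_2\}}=2+1=3\), while \(\opts{2}{T-f}=\opts{1}{\{a_1,r,v_2,v_3\}}=4\). The hedge you insert (``whenever \(r\) is still a \(1\)-median of the reduced tree'') is exactly the failure mode: after removing all of \(Q\), the median of \(T'\) can shift away from \(r\), lowering \(\opts{2}{T-e^*}\) further. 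So ``not forced'' cannot be read as ``the two values coincide''; at best it means there exist instances (e.g.\ \(r\) with three leaf neighbours and the pendant path \(a_1,a_2\)) where \(e^*\) and \(f\) are equally good.
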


\begin{remark}
The procedure described above needs to compute an optimal solution of the 1-median location problem and the distance from all leaves to this solution. The former computation can be done in time~$\mathcal{O}(n)$ (cf. \cite{goldman1971optimal}), whereas the latter can be done in time~$\mathcal{O}(n^2)$ by a breadth-first-search on every vertex. Thus, in total, the procedure runs in $\mathcal{O}(n^2)$.
\end{remark}

\begin{remark}
	It is not possible to apply the procedure to generalized problems. There are several alterations possible, where one can change the input in one parameter while the rest of the problem setup stays the same. This includes a change of the interdiction budget (\(B>1\)) or an arbitrary length function on the edges. 
	For the first problem, we immediately see, that simply choosing the \(B\)  leaves closest to the median location and interdict them in one step, can lead to the following problems. First of all, it is not clear, whether \(B\) leaves exist in the original graph. But even if there are, the procedure does not necessarily  give the optimal interdiction strategy as example~\ref{ex:onestepno} shows.
	\begin{example}\label{ex:onestepno}
		Let a tree \(T=(V,E)\) be given as in Figure~\ref{fig:onestepnotree}.
		\begin{figure}[h!]
			\centering
			\begin{tikzpicture}
				[vertex/.style={shape=circle,fill=black, inner sep=2},
				amount/.style={pos=.5, font=\scriptsize}, inner sep=.5]
				\foreach \x/\y/\z in {
					-2.75/.75/1,
					-1.75/.75/2,
					-1.75/-.75/3,
					-1/0/4,
					0/0/5,
					2/0/7,
					3.75/.75/8,
					2.75/.75/9,
					3.75/-.75/10,
					2.75/-.75/11} \node[vertex] (\z) at (\x,\y) {};
				\node[shape=circle,fill=blue, inner sep=2] (6) at (1,0) {};
				\draw (1) --  (2) -- (4) -- (5) --  (6) --  (7) -- (9) -- (8) ;
				\draw (3) -- (4);
				\draw (7) -- (11) -- (10);
			\end{tikzpicture}
			\caption{Exemplary tree, optimal median location depicted in blue.}\label{fig:onestepnotree}
		\end{figure}
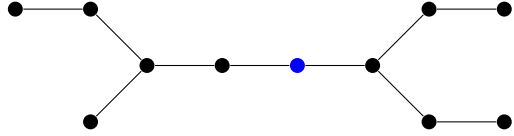
		Let the interdiction budget be given with \(B=3\). Then, if we choose the \(3\) leaves closest to the optimal location, we yield the graph in Figure~\ref{fig:onestepnotree1}. The optimal objective function value of the locator is \(15\) in this case, but there are better interdiction strategies. One of the optimal strategies is depicted in Figure~\ref{fig:onestepnotree2} and leaves the locator with an optimal objective function value of \(16\). 
		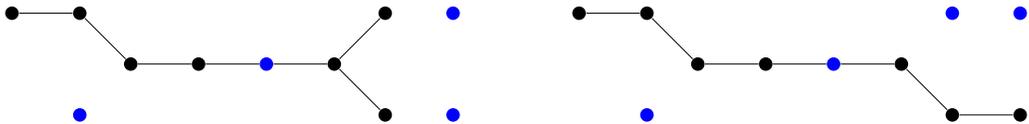
\begin{figure}[h!]
			\centering
			\begin{subfigure}[t]{.45\textwidth}
				\centering
				\resizebox{\textwidth}{!}{%
			\begin{tikzpicture}
				[vertex/.style={shape=circle,fill=black, inner sep=2},
				amount/.style={pos=.5, font=\scriptsize}, inner sep=.5]
				\foreach \x/\y/\z in {
					-2.75/.75/1,
					-1.75/.75/2,
					-1/0/4,
					0/0/5,
					2/0/7,
					2.75/.75/9,
					2.75/-.75/11} \node[vertex] (\z) at (\x,\y) {};
				\node[shape=circle,fill=blue, inner sep=2] (6) at (1,0) {};
				\node[shape=circle,fill=blue, inner sep=2] (3) at (-1.75,-.75) {};
				\node[shape=circle,fill=blue, inner sep=2] (8) at (3.75,.75) {};
				\node[shape=circle,fill=blue, inner sep=2] (10) at (3.75,-.75) {};
				\draw (1) --  (2) -- (4) -- (5) --  (6) --  (7) -- (9) ;
				\draw (7) -- (11);
			\end{tikzpicture}
		}
				\caption{Exemplary tree after interdiction step following the idea of Theorem~\ref{thm:interdictionLeafOnTree}, median locations depicted in blue.}\label{fig:onestepnotree1}
			\end{subfigure}
			\hfill
			\begin{subfigure}[t]{.45\textwidth}
				\centering
				\resizebox{\textwidth}{!}{%
			\begin{tikzpicture}
				[vertex/.style={shape=circle,fill=black, inner sep=2},
				amount/.style={pos=.5, font=\scriptsize}, inner sep=.5]
				\foreach \x/\y/\z in {
					-2.75/.75/1,
					-1.75/.75/2,
					-1/0/4,
					0/0/5,
					2/0/7,
					3.75/-.75/10,
					2.75/-.75/11} \node[vertex] (\z) at (\x,\y) {};
				\node[shape=circle,fill=blue, inner sep=2] (6) at (1,0) {};
				\node[shape=circle,fill=blue, inner sep=2] (3) at (-1.75,-.75) {};
				\node[shape=circle,fill=blue, inner sep=2] (8) at (3.75,.75) {};
				\node[shape=circle,fill=blue, inner sep=2] (9) at (2.75,.75) {};
				\draw (1) --  (2) -- (4) -- (5) --  (6) --  (7);
				\draw (7) -- (11) -- (10);
			\end{tikzpicture}
		}
				\caption{Exemplary tree after optimal interdiction step, median locations depicted in blue.}\label{fig:onestepnotree2}
			\end{subfigure}
		\caption{Exemplary tree of Figure~\ref{fig:onestepnotree} after different interdiction strategies}
		\end{figure}
	\end{example} 	 
	For the case of arbitrary lengths on the edges, we find, that in case 1 of the proof, the inequality  \(\card{T_{u_2^*}}\geq d(r,u_2')-d(r,u_2^*)+1\) relies on the fact, that the distances can be calculated via the amount of vertices, which is only possible because of the unit length values of the edges. In fact, consider the following minimal example~\ref{ex:arbitraryno} illustrating that the proposed method does not work for arbitrary lengths.
	\begin{example}\label{ex:arbitraryno}
		Let a tree \(T=(V,E)\) be given as in Figure~\ref{fig:arbitrarynotree} and \(B=1\).
			\begin{figure}[h!]
				\centering
				\begin{tikzpicture}
					[vertex/.style={shape=circle,fill=black, inner sep=2},
					amount/.style={pos=.5, font=\scriptsize}, inner sep=.5]
					\foreach \x/\y/\z in {-1.75/.75/1,
						-1.75/-.75/2,
						-1/0/3,
						1/0/5,
						1.75/-.75/6,
						1.75/.75/7} \node[vertex] (\z) at (\x,\y) {};
					\node[shape=circle,fill=blue, inner sep=2] (4) at (0,0) {};
					\draw(1) -- node[amount, above right] {\(10\)} (3) -- node[amount, above]{\(1\)} (4) -- node[amount, above]{\(1\)} (5) -- node[amount, below left]{\(10\)} (6) ;
					\draw (2) -- node[amount, below right]{\(10\)} (3);
					\draw (5) -- node[amount, above left]{\(10\)}  (7);
				\end{tikzpicture}
			\caption{Exemplary tree, optimal median location depicted in blue.}\label{fig:arbitrarynotree}
			\end{figure}
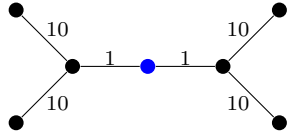
		Following the procedure proposed in Theorem~\ref{thm:interdictionLeafOnTree}, one of the four edges incident to the leaves gets interdicted resulting in the tree in Figure~\ref{fig:arbitrarynotree1}. The optimal objective function value of the locator is \(35\), while the optimal interdiction strategy depicted in Figure~\ref{fig:arbitrarynotree2} leaves the locator with an optimal objective function value of \(41\). 
			\begin{figure}[h!]
				\centering
				\begin{subfigure}{.45\textwidth}
					\centering
				\begin{tikzpicture}
					[vertex/.style={shape=circle,fill=black, inner sep=2},
					amount/.style={pos=.5, font=\scriptsize}, inner sep=.5]
					\foreach \x/\y/\z in {-1.75/.75/1,
						-1.75/-.75/2,
						-1/0/3,
						1/0/5,
						1.75/-.75/6} \node[vertex] (\z) at (\x,\y) {};
					\node[shape=circle,fill=blue, inner sep=2] (4) at (0,0) {};
					\node[shape=circle,fill=blue, inner sep=2] (7) at (1.75,.75) {};
					\draw(1) -- node[amount, above right] {\(10\)} (3) -- node[amount, above]{\(1\)} (4) -- node[amount, above]{\(1\)} (5) -- node[amount, below left]{\(10\)} (6) ;
					\draw (2) -- node[amount, below right]{\(10\)} (3);
				\end{tikzpicture}
				\caption{Exemplary tree after interdiction step following Theorem~\ref{thm:interdictionLeafOnTree}, median locations depicted in blue.}\label{fig:arbitrarynotree1}
			\end{subfigure}
		\hfill
			\begin{subfigure}{.45\textwidth}
				\centering
			\begin{tikzpicture}
				[vertex/.style={shape=circle,fill=black, inner sep=2},
				amount/.style={pos=.5, font=\scriptsize}, inner sep=.5]
				\foreach \x/\y/\z in {-1.75/.75/1,
					-1.75/-.75/2,
					0/0/4,
					1.75/-.75/6,
					1.75/.75/7} \node[vertex] (\z) at (\x,\y) {};
				\node[shape=circle,fill=blue, inner sep=2] (3) at (-1,0) {};
				\node[shape=circle,fill=blue, inner sep=2] (5) at (1,0) {};
				\draw (1) -- node[amount, above right] {\(10\)} (3) -- node[amount, above]{\(1\)} (4);
				\draw (5) -- node[amount, below left]{\(10\)} (6) ;
				\draw (2) -- node[amount, below right]{\(10\)} (3);
				\draw (5) -- node[amount, above left]{\(10\)}  (7);
			\end{tikzpicture}
			\caption{Exemplary tree after optimal interdiction step, median locations depicted in blue.}\label{fig:arbitrarynotree2}
		\end{subfigure}
	\caption{Exemplary tree of Figure~\ref{fig:arbitrarynotree} after different interdiction strategies}
		\end{figure}
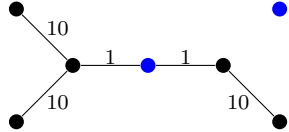
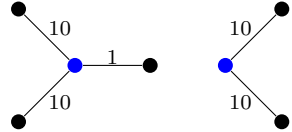
	\end{example} 	 
\end{remark}

\section{Conclusion}\label{sec:conclusion}
In this article, we introduced the \(p\)-median location interdiction problem (MLIP). We proved MLIP to be \(\NP\)-hard on trees in Theorem~\ref{thm:compl}. We then considered the MLIP where the underlying graph is given by a path. For the case of unit lengths, we proved that interdicting the edge incident to a leaf is an optimal interdiction strategy. This strategy can be applied iteratively, if more than one edge can get interdicted. For the case of arbitrary lengths, we showed that an optimal interdiction strategy can be computed in polynomial time. Furthermore, we proposed a polynomial time algorithm for the case of unit lengths on a tree. 

The cases of arbitrary lengths on a tree graph as well as multiple interdiction of edges may be considered in the future. It would also be interesting to investigate the MLIP on different graph classes, e.g. on series-parallel graphs. Moreover, other types of location problems may be investigated in the context of interdiction.

\section*{Acknowledgments}
This work is partially supported by the European Social Fund+ (ESF+), project SchuMaMoMINT+, and by project Ageing Smart funded by Carl-Zeiss-Stiftung.

\bibliography{mybibfile.bib}

\end{document}